\newif\ifnotes
\definecolor{denim}{rgb}{0.08, 0.38, 0.74}
\definecolor{periwinkle}{rgb}{0.6, 0.6, 0.95}
\definecolor{wildblueyonder}{rgb}{0.64, 0.68, 0.82}
\newtheorem{theorem}{Theorem}[section]
\newtheorem{definition}[theorem]{Definition}
\newtheorem{lemma}[theorem]{Lemma}
\newtheorem{claim}[theorem]{Claim}
\newtheorem{corollary}[theorem]{Corollary}
\theoremstyle{remark}
\Crefname{theorem}{Theorem}{Theorems}
\Crefname{claim}{Claim}{Claims}
\Crefname{lemma}{Lemma}{Lemmas}
\Crefname{proposition}{Proposition}{Propositions}
\Crefname{corollary}{Corollary}{Corollaries}
\Crefname{definition}{Definition}{Definitions}
\newcommand{\ECC}{\mathsf{ECC}}
\newcommand{\iECC}{\mathsf{iECC}}
\newcommand{\maj}{\text{maj}}
\newcommand{\Diam}{\text{Diam}}
\newcommand{\bbN}{\mathbb{N}}
\newcommand{\bbR}{\mathbb{R}}
\newcommand{\customlabel}[2]{%
   \protected@write \@auxout {}{\string \newlabel {#1}{{#2}{\thepage}{#2}{#1}{}} }%
   \hypertarget{#1}{#2}
}
\newcommand{\attack}[2]{
    \stepcounter{figure}
    \vspace{0.15cm}
    { \small
    \begin{tcolorbox}[breakable, enhanced, colback=wildblueyonder!20]
    \begin{center}
    {\bf \underline{Attack~\customlabel{attack:#1}{\thefigure}}}
    \end{center}
    
    #2
    \end{tcolorbox}
    }
}
\newcounter{casenum}
\newcounter{subcasenum}
\newcounter{casenump}
\newcommand{\casep}[2]{
    \ifthenelse{\equal{\value{casenump}}{0}}{
    \vskip.5\baselineskip\par\noindent
    }{}
    {\it Case \arabic{casenump}:} {\it #1}
    \vskip0.1\baselineskip
    \begin{addmargin}[1.5em]{1em}
    #2
    \end{addmargin}
    \addtocounter{casenump}{1}
}
\newcounter{subcasenump}
\begin{document}

\title{A New Upper Bound on the Maximal Error Resilience of Interactive Error-Correcting Codes}
\author{Meghal Gupta \thanks{E-mail:\texttt{meghal@berkeley.edu}. This author was supported by an UC Berkeley Chancellor’s fellowship.}\\UC Berkeley \and Rachel Yun Zhang\thanks{E-mail:\texttt{rachelyz@mit.edu}. This research was supported in part by DARPA under Agreement No. HR00112020023, an NSF grant CNS-2154149, and NSF Graduate Research Fellowship 2141064.}\\MIT}
\date{\today}

\sloppy
\maketitle
\begin{abstract}
In an \emph{interactive error-correcting code ($\iECC$)}, Alice and Bob engage in an interactive protocol with the goal of Alice communicating a message $x \in \{ 0, 1 \}^k$ to Bob in such a way that even if some fraction of the total communicated bits are corrupted, Bob can still determine $x$. It was shown in works by Gupta, Kalai, and Zhang (STOC 2022) and by Efremenko, Kol, Saxena, and Zhang (FOCS 2022) that there exist $\iECC$'s that are resilient to a larger fraction of errors than is possible in standard error-correcting codes without interaction. 

One major question in the study of $\iECC$'s is to determine the optimal error resilience achievable by an $\iECC$. In the case of bit flip errors, it is known that an $\iECC$ can achieve $\frac14 + 10^{-5}$ error resilience (Efremenko, Kol, Saxena, and Zhang), while the best known upper bound is $\frac27 \approx 0.2857$ (Gupta, Kalai, and Zhang). In this work, we improve upon the upper bound, showing that no $\iECC$ can be resilient to more than $\frac{13}{47} \approx 0.2766$ fraction of errors.
\end{abstract}
\thispagestyle{empty}
\newpage
\tableofcontents
\pagenumbering{roman}
\newpage
\pagenumbering{arabic}

\section{Introduction}

Consider the following task: Alice wishes to communicate a message to Bob such that even if a constant fraction of the communicated bits are adversarially tampered with, Bob is still guaranteed to be able to determine her message. This task motivated the prolific study of \emph{error-correcting codes}, starting with the seminal works of~\cite{Shannon48, Hamming50}. An error-correcting code encodes a message $x$ into a longer codeword $\ECC(x)$, such that the Hamming distance between any two distinct codewords is a constant fraction of the length of the codewords. To communicate a message $x$, Alice sends Bob the corresponding codeword $\ECC(x)$, and the fact that the distance between any two codewords is large guarantees that an adversary must corrupt a large fraction of the communication in order for Bob to decode to the wrong codeword. 

An important question in the study of error-correcting codes is determining the maximal possible error resilience, that is, the maximal possible number of bits such that as long as an adversary does not flip more than that number of bits, Bob is guaranteed to decode to the right $x$. Indeed, many works focus on precisely this question.  In general, the error resilience parameter depends on the alphabet size: in this work, we focus on the binary alphabet. It is well known that in the adversarial bit-flip model, no $\ECC$ can be resilient to more than $\frac14$ corruptions.

\paragraph{Interactive Error-Correcting Codes.}

Recently, the work \cite{GuptaKZ22} proposed using \emph{interaction} to improve the error resilience past $\frac14$. They define the model of an \emph{interactive error-correcting code ($\iECC$)} as follows. In an $\iECC$, Alice and Bob engage in a fixed length, fixed order protocol where Alice's goal is to communicate a message $x$ to Bob. The adversary is given a corruption budget which is an $\alpha$-fraction of the total communication (for some $\alpha>0$). She can spend it arbitrarily (e.g., all on forward communication, or on some combination of forward and feedback communication). The goal is for Bob to learn $x$ no matter how the adversary corrupts the communicated bits.

A priori, it is not clear that an $\iECC$ should allow one to improve the error resilience past $\frac14$. Indeed, the presence of Bob's bits increase the adversary's corruption allowance relative to Alice's forward communication, while not having clear benefit (as the adversary could choose to corrupt all of Bob's feedback to an adversarial string, thereby potentially causing more harm than good). Nevertheless, the work of~\cite{GuptaKZ22} gave the first evidence that interaction has the power to buff error resilience: they demonstrate an $\iECC$ in the case of adversarial \emph{erasures} that is resilient to more erasures than standard (non-interactive) $\ECC$'s can possible be resilient to.

In the case of bit flip errors, the following work of~\cite{EfremenkoKSZ22} demonstrated the first evidence that interaction is useful against bit flip errors as well: their $\iECC$ achieves resilience to $\frac14 + 10^{-5}$ errors, more than the maximal possible error resilience of $\frac14$ achievable by any $\ECC$. This prompts the question: \emph{What is the optimal error resilience achievable by an $\iECC$?} 

The current best known upper (impossibility) bound for this problem was given in~\cite{GuptaKZ22}, who showed that no $\iECC$ can be resilient to more than $\frac27$ adversarial errors. This upper bound came from the combination of two natural attacks, one of which is guaranteed to work no matter how the rounds in which Alice and Bob speak are distributed.
\begin{enumerate}[label=\textbf{Attack \arabic*:}, leftmargin=*]
    \item 
        Corrupt \emph{none} of Bob's bits. Then, Bob's messages provide perfect reliable feedback, in which case works about \emph{error-correcting codes with feedback} beginning with~\cite{Berlekamp64} tell us that it suffices to corrupt $\frac13$ of Alice's bits. 
    \item 
        Corrupt \emph{half} of Bob's bits so that his messages appear random and thus essentially are useless: then Alice's communication essentially reduces to the case of a standard error-correcting code, in which case an adversary can corrupt $\frac14$ of her bits to confuse Bob between two possible values of $x$.
\end{enumerate}

Nevertheless, the question remained: \emph{What is the largest possible error resilience of an $\iECC$? Is it possible to achieve error resilience equal to this natural upper bound of $\frac27$?}

In this work, we answer the latter question in the negative, providing a new upper bound of $\frac{13}{47} \approx 0.2766$, improving upon the previous best upper bound of $\frac27 \approx 0.2857$. 

\begin{theorem}[Main Result]
    For sufficiently small $\epsilon>0$, there exists $k_0 = k_0(\epsilon) \in \bbN$ such that for any $k > k_0$, no $\iECC$ over the binary bit flip channel where Alice is trying to communicate $x \in \{ 0, 1 \}^k$ is resilient to $\frac{13}{47}+\epsilon$ fraction of adversarial bit flips.
\end{theorem}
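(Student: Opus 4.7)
My plan is to improve the $\frac{2}{7}$ upper bound of~\cite{GuptaKZ22} by introducing a third attack that interpolates between their two baseline strategies. I would suppose for contradiction that some $\iECC$ $\Pi$ achieves error resilience $\frac{13}{47}+\epsilon$, and let $\alpha$ denote the fraction of rounds in which Alice speaks, with $\beta = 1-\alpha$ for Bob. The two existing attacks incur costs $\frac{\alpha}{3}$ (Attack~1: preserve feedback and invoke Berlekamp's bound on Alice's bits) and $\frac{\beta}{2}+\frac{\alpha}{4}$ (Attack~2: fully scramble Bob's bits and invoke the $\frac{1}{4}$ ECC bound on Alice's bits), balancing at $\alpha=\frac{6}{7}$ to yield $\frac{2}{7}$. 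To strictly beat this, I need an attack that is strictly cheaper than both in a neighborhood of this critical parameter.

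The attack I would design is a \emph{partially scrambling} family parameterized by a target Bob-corruption rate $p\in(0,\frac{1}{2})$. For each $p$, the adversary commits to two candidate messages $x_0,x_1$ and runs a two-execution simulation: at each Bob-round she strategically flips Bob's bit (at overall rate $p$) to steer his state toward a hybrid trajectory consistent with both $x_0$ and $x_1$, and at each Alice-round she corrupts Alice's bit to enforce this hybrid. A two-world averaging argument of the type used in the classical $\frac{1}{4}$ ECC bound should yield total cost of the form $p\beta + f(p)\alpha$, where $f(p)$ is the Alice-side corruption budget required under $p$-noisy feedback, with boundary values $f(0)=\frac{1}{3}$ (Berlekamp) and $f(\tfrac{1}{2})=\frac{1}{4}$ (standard ECC). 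Running the min-max optimization — adversary minimizes over $p$ and over all three attacks, designer maximizes over $\alpha$ — should yield $\frac{13}{47}$ when the interpolating attack binds against Attack~1 at $\alpha^*=\frac{39}{47}$, giving a tangency condition on $f$ at the optimum.

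The main obstacle I anticipate is establishing a sufficiently tight $f(p)$: the naive linear interpolation $\frac{1}{3}-\frac{p}{6}$ gives no improvement (the third attack equals the maximum of the other two everywhere), so the proof must show that $f$ is strictly below this line for some $p$. Achieving this requires a careful per-round design ensuring that \emph{strategic} rather than uniformly random Bob-corruption simultaneously creates ambiguity in Bob's decoded value and wastes Alice's ECC budget — essentially bootstrapping the $\frac{1}{4}$ bound using the $p$-rate of targeted Bob-corruption to just the right degree. The two-execution simulation must remain consistent across rounds despite Bob's evolving state depending on all past corruptions; managing this round-by-round coupling while keeping the total Bob-corruption to $p\beta$ is, I expect, the delicate technical heart of the proof. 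Matching the constants to the precise value $\frac{13}{47}$ would likely require a potential-function argument tracking the divergence of the two simulated worlds at each round, together with a careful case analysis of the protocol's round structure to ensure the attack is implementable regardless of how the Alice- and Bob-rounds are interleaved.
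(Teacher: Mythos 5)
Your proposal correctly identifies that the $\frac{2}{7}$ bound of \cite{GuptaKZ22} is the balance point of two attacks and that improving it requires a third attack that beats the simple two-point interpolation. However, the mechanism you propose — a single, global ``partially scrambling'' attack parametrized by a uniform Bob-corruption rate $p$, with cost $p\beta + f(p)\alpha$ and $f(p)$ strictly below the line $\frac13 - \frac{p}{6}$ — is not what the paper does, and I do not see how to make it work. You yourself flag the core obstacle: there is no reason to believe $f(p)$ is strictly sub-linear. Indeed, the natural way to realize intermediate $p$ is time-sharing between the $p=0$ and $p=\frac12$ regimes, which achieves \emph{exactly} the linear interpolation and hence no improvement; nothing in your proposal explains where sub-linearity would come from, and the ``potential-function argument'' is left entirely unspecified.

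The paper's actual improvement comes from a different source. It does not vary the scrambling \emph{rate}; instead it splits the protocol into two fixed \emph{phases} (the first $\frac{21}{47}n$ rounds and the last $\frac{26}{47}n$) and composes heterogeneous attacks across phases, carrying the surviving-inputs state from one phase to the next. Concretely: one new attack applies the $\frac14$-ECC argument against \emph{three} candidate inputs (not two) in phase~1 while scrambling Bob there, and then applies Berlekamp's $\frac13$ argument in phase~2 to eliminate one of the three, costing roughly $\frac14 A_1 + \frac12 B_1 + \frac13 A_2$. The other new attack is fundamentally \emph{asymmetric across the two candidate inputs}: if Alice has $x_1$ the adversary spends nothing in phase~1, while if Alice has $x_2$ the adversary spends $\approx\frac12(A_1+B_1)$ to make Bob's phase~1 view match $x_1$; phase~2 is then a full scramble for both. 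The savings come from this chaining and from the fact that the corruption budget need only bound the \emph{maximum} over the two worlds, not a symmetric per-round rate. Your proposal contains neither the phase decomposition nor the asymmetry, which are precisely the ideas that let the bound drop below the linear-interpolation barrier you correctly identify. As written, your plan would stall at the step where you need to construct $f$ with $f(p) < \frac13 - \frac{p}{6}$, since a uniform-rate, two-world-symmetric attack does not obviously admit such an $f$.
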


\section{Related Works}

In this section, we discuss previous work on interactive error-correcting codes, as well as prior work on error-correcting codes with feedback.

\subsection{Interactive Error-Correcting Codes}

The notion of an \emph{interactive error-correcting code} ($\iECC$) was first introduced in \cite{GuptaKZ22}, who demonstrated an $\iECC$ resilient to $\frac35$ fraction of adversarial erasures, surpassing the best possible erasure resilience of standard $\ECC$'s of $\frac12$. They also gave an upper bound of $\frac23$ on the erasure resilience of any $\iECC$. In the case of bit flip errors, they proved an upper bound of $\frac27$ on the error resilience achievable by any $\iECC$, leaving open the problem of constructing an $\iECC$ resilient to greater than $\frac14$ adversarial errors. 

The followup work of~\cite{GuptaZ22B} improved upon the erasure $\iECC$ of~\cite{GuptaKZ22}, giving a construction of an $\iECC$ with \emph{positive rate} but resilient to only $\frac{6}{11}$ adversarial erasures.


In the bit flip error model, \cite{EfremenkoKSZ22} answered~\cite{GuptaKZ22}'s question in the affirmative, constructing an $\iECC$ with error resilience $\frac14+10^{-5}$. This narrowed the optimal error resilience of any $\iECC$ to the range $[\frac14 + 10^{-5}, \frac27]$. In this paper, we further narrow this range, improving the upper bound from $\frac27$ to $\frac{13}{47}$.

\subsection{Error-Correcting Codes with Feedback}\label{sec:related:feedback}

The use of interaction in the noise resilient communication of a message has been studied previously in the form of \emph{error-correcting codes with feedback}. In an error-correcting code with feedback, Alice wishes to communicate a message to Bob in an error-resilient fashion, provided that after every message she sends she receives some feedback from Bob about what he has received. She can then use this noiseless feedback to choose the next bit that she sends. Error-correcting codes with feedback were first introduced in the Ph.D. thesis of Berlekamp~\cite{Berlekamp64} and have been studied in a number of followup works, including~\cite{Berlekamp68,Zigangirov76,SpencerW92,HaeuplerKV15,GuptaGZ22,AhlswedeDL06}. 
Originally, this feedback was considered in the \emph{noiseless} setting, meaning that none of Bob's messages are allowed to be corrupted, and error rate is calculated solely as a function of the number of messages Alice sends. That is, Bob's feedback is free and always correct, so that Alice can tailor her next message to specifically the bit of information Bob most needs to hear.


In the bit flip error model, \cite{Berlekamp68,Zigangirov76,SpencerW92,HaeuplerKV15} showed that the maximal error resilience of an error-correcting code with noiseless feedback is $\frac13$. \cite{GuptaGZ22} show this is achievable even by protocols that only send logarithmically many bits of feedback over a constant number of rounds. For explicit constant number of rounds of feedback, \cite{BravermanEKSZ22} initiated the study of the noise resilience vs. round complexity tradeoff for both erasures and errors. For larger alphabets, the maximal error resilience was studied in~\cite{AhlswedeDL06}.

When the feedback is \emph{noisy}, i.e. the feedback may be corrupted as well, much less is known. Several works such as \cite{BurnashevY08a, BurnashevY08b} considered $\ECC$'s with noisy feedback over the binary symmetric channel. 
\cite{WangQC17} considers adversarial corruption, under a model which places separate corruption budgets on the forward and feedback rounds. They construct a scheme that is resilient to $\frac12$ of the forward communication and $1$ of the feedback being erased. We note that their scheme's forward erasure resilience is equal to that achievable by standard error-correcting codes. 
\section{Preliminaries}

\paragraph{Notation.} In this paper, we use the following notation:
\begin{itemize}
    \item For a string $x$, $x[i]$ denotes the $i$'th bit of $x$, and $x[i:j]$ denotes the substring of $x$ starting with the $i$'th bit and ending with the $j$'th bit.
    \item For any $n \in \bbN$, $[n]$ denotes the integers $1, \dots, n$. For any $n, m \in \bbN$, $[n, m]$ denotes the integers $n, \dots, m$. 
    \item The \emph{diameter} of three strings $s_1, s_2, s_3$, denoted $\Diam(s_1, s_2, s_3)$, is the maximal Hamming distance between any two of the three strings.
    \item The majority function, denoted $\maj$, takes as input some number of bits and outputs the most frequent bit. (If there are an equal number of $0$'s and $1$'s, it outputs either.)
\end{itemize}

\subsection{Interactive Error-Correcting Codes} 

We formally define the notion of an interactive error-correcting code ($\iECC$). 

\begin{definition} [Interactive Error-Correcting Code]
    An interactive error-correcting code ($\iECC$) is a non-adaptive interactive protocol\footnote{A nonadaptive interactive protocol is a fixed-length, fixed-order of speaking protocol between two parties where in each round, a predetermined party sends a single bit to the other.} $\pi = \{ \pi_k \}_{k \in \bbN}$, with the following syntax: 
    \begin{itemize}
        \item At the beginning of the protocol, Alice receives as private input some $x \in \{ 0, 1 \}^k$.
        \item At the end of the protocol, Bob outputs some $\hat{x} \in \{ 0, 1 \}^k$.
    \end{itemize}
    We say that $\pi$ is $\alpha$-error resilient if there exists $k_0 \in \bbN$ such that for all $k > k_0$ and $x \in \{ 0, 1 \}^k$, and for all online adversarial attacks consisting of flipping at most $\alpha \cdot |\pi|$ of the total communication, Bob outputs $x$ at the end of the protocol with probability $1$.
\end{definition}

\subsection{Important Lemmas} 

We now state some important lemmas and combinatorial theorems. 

\begin{lemma} \label{lem:clique-dist-12}
    Among any $K$ strings $s_1, \dots, s_K \in \{ 0, 1 \}^\ell$, there exist two strings with Hamming distance at most $\left( \frac12 + \frac{1}{2(K-1)} \right) \cdot \ell$. 
\end{lemma}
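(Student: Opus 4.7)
The plan is to use a standard averaging argument on the sum of all pairwise Hamming distances, exploiting the fact that at each coordinate, the pair-disagreement count is maximized when the bits are split as evenly as possible.

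First, I would define $S = \sum_{1 \le i < j \le K} \mathrm{dist}(s_i, s_j)$ and swap the order of summation to write $S = \sum_{t=1}^{\ell} a_t b_t$, where $a_t$ and $b_t = K - a_t$ denote, respectively, the number of strings with $0$ and with $1$ in coordinate $t$. Since a pair $(i,j)$ disagrees in coordinate $t$ exactly when $s_i[t] \ne s_j[t]$, the contribution of coordinate $t$ is $a_t b_t$.

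Second, for each $t$, I would bound $a_t b_t$ using AM-GM (or a direct parity argument): since $a_t + b_t = K$, we have $a_t b_t \le \lfloor K/2 \rfloor \lceil K/2 \rceil \le K^2/4$. Summing over $t$ gives $S \le \ell K^2/4$. Dividing by the number of pairs $\binom{K}{2} = K(K-1)/2$ then yields that the average pairwise distance satisfies
\[
\frac{S}{\binom{K}{2}} \;\le\; \frac{\ell K^2 / 4}{K(K-1)/2} \;=\; \frac{K\ell}{2(K-1)} \;=\; \left( \frac{1}{2} + \frac{1}{2(K-1)} \right) \ell.
\]
Since the minimum pairwise distance is at most the average pairwise distance, there must exist two strings $s_i, s_j$ achieving this bound.

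There is no real obstacle here — the proof is a one-line averaging argument once the double-counting is set up. The only thing to note is that when $K$ is odd one actually gets the slightly stronger bound $\frac{\ell(K+1)}{2K}$, but this is smaller than $\left(\frac{1}{2} + \frac{1}{2(K-1)}\right)\ell$, so the stated (cleaner) inequality always holds.
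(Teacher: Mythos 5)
Your proof is correct and follows essentially the same averaging argument as the paper: double-count the pairwise disagreements per coordinate, bound each coordinate's contribution by $K^2/4$, and divide by $\binom{K}{2}$. The extra observation about the odd-$K$ case is a valid minor refinement that the paper does not bother to state.
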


\begin{proof}
    For any index $\iota \in [\ell]$, let $c_\iota$ be the number of strings $s_j$ for which $s_j[\iota] = 0$. Then, 
    \begin{align*}
        \sum_{(i, j) \in \binom{K}{2}} \Delta(s_i, s_j) 
        = \sum_{\iota \in [\ell]} c_\iota(K - c_\iota) 
        \le \ell \cdot \frac{K^2}{4},
    \end{align*}
    so by the pigeonhole principle, there exists $(i, j) \in \binom{K}{2}$ such that $\Delta(s_i, s_j) \le \frac{\ell \cdot K^2/4}{K(K-1)/2} = \left( \frac12 + \frac{1}{2(K-1)} \right) \cdot \ell$.
\end{proof}

\begin{theorem}\cite{Shearer83} \label{thm:shearer}
    There exists $\delta_0$ such that for any $\delta < \delta_0$, in any undirected graph on $n$ vertices with at most $\delta n^3$ triangles, there is an independent set of size $\frac{1}{4\delta}$. 
\end{theorem}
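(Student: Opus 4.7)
My approach would be to combine the Caro-Wei lower bound on the independence number with the structural consequences of having few triangles, splitting into cases based on the edge density of $G$.

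\emph{Sparse case.} If $|E(G)| \leq 2\delta n^2$, the Caro-Wei bound directly yields the required independent set. Explicitly,
\[
\alpha(G) \;\geq\; \sum_v \frac{1}{d_v+1} \;\geq\; \frac{n^2}{2|E(G)| + n} \;\geq\; \frac{n^2}{4\delta n^2 + n},
\]
where the second inequality is Cauchy-Schwarz (equivalently AM-HM) applied to $\sum_v(d_v+1) = 2|E(G)|+n$. For $n$ sufficiently large in terms of $\delta$, this quantity exceeds $\tfrac{1}{4\delta}$, giving the required independent set.

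\emph{Dense case.} If $|E(G)| > 2\delta n^2$, I would instead exploit the triangle bound. Writing $t_v$ for the number of triangles through $v$, we have $\sum_v t_v = 3 \cdot t(G) \leq 3\delta n^3$, so by averaging there exists a vertex $v$ whose neighborhood $N(v)$ spans at most $O(\delta n^2)$ edges among $d_v$ vertices. Applying Caro-Wei inside $G[N(v)]$ would produce an independent set in $N(v)$ (hence in $G$) of size at least $d_v^2 / (2t_v + d_v)$. Choosing $v$ to be simultaneously of large degree (guaranteed by the density assumption) and to have small $t_v$ (via an averaging or extremal-choice argument) should balance this ratio to the desired $\tfrac{1}{4\delta}$.

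The main obstacle I anticipate is extracting the \emph{sharp} constant $\tfrac{1}{4\delta}$. A naive combination of Caro-Wei with Kruskal-Katona gives only $\Omega(\delta^{-2/3})$: a graph with $|E(G)| \approx cn^2$ edges has at least $\gtrsim c^{3/2} n^3$ triangles, which fails to contradict the $\delta n^3$ bound precisely at $c = \Theta(\delta^{2/3})$. Achieving the tight $\delta^{-1}$ dependence therefore requires a finer argument — likely the fractional-coloring or entropy-averaging technique characteristic of Shearer's original work on independence numbers of (near-)triangle-free graphs. Designing the right probabilistic selection of the vertex $v$ in the dense case, or iterating the neighborhood reduction, so as to avoid the $\delta^{2/3}$ slack inherent in Kruskal-Katona, is the delicate step.
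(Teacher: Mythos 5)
The paper does not prove this theorem; it invokes it as a black-box citation to Shearer (1983), so there is no in-paper proof to compare against. More importantly, the statement as printed is in fact false, and this should be flagged before evaluating any particular proof strategy.

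\emph{Counterexample.} Let $G$ be the disjoint union of $m$ cliques, each on $n/m$ vertices. Then $G$ has $m\binom{n/m}{3} \le n^3/(6m^2)$ triangles, so the hypothesis holds with $\delta = 1/(6m^2)$, which lies below any fixed $\delta_0$ once $m$ is taken large enough. Yet $\alpha(G) = m$, which is strictly smaller than the claimed lower bound $1/(4\delta) = 3m^2/2$. Genuine Shearer-type independence bounds always carry a factor of roughly $n/d$ involving the average degree $d$; the statement here has discarded all reference to $n$ and the degrees, and without such a factor a bound of order $\delta^{-1}$ from the triangle count alone is unattainable. The union-of-cliques example shows that $\delta^{-1/2}$ is the correct degree-free power.

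Setting that aside, your sketch has two further gaps. In the sparse case, the quantity you derive, $n^2/(4\delta n^2 + n) = 1/(4\delta + 1/n)$, is strictly less than $1/(4\delta)$ for every $n$; ``for $n$ sufficiently large'' it approaches the target but never reaches it. In the dense case you invoke Kruskal--Katona to claim that $cn^2$ edges force $\gtrsim c^{3/2}n^3$ triangles, but no such lower bound exists: a complete bipartite graph has $\Theta(n^2)$ edges and zero triangles. Kruskal--Katona runs in the opposite direction, bounding the number of edges from below given the number of triangles. The difficulty you anticipate in tightening $\delta^{-2/3}$ up to $\delta^{-1}$ is therefore not a matter of finding a finer selection argument for the vertex $v$; it reflects the falsity of the stated bound, which must be repaired (e.g.\ to a degree-dependent form, or to $\delta^{-1/2}$) before any proof can go through.
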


\begin{corollary} \label{cor:shearer}
    For small enough $\epsilon > 0$ and for any collection of $K$ strings $s_1, \dots, s_K \in \{ 0, 1 \}^\ell$, there are $\epsilon K^3/4$ unordered triples $(s_i, s_j, s_k)$ of distinct strings such that $\Diam(s_i, s_j, s_k) \le \left( \frac12 + \epsilon \right) \cdot \ell$. 
\end{corollary}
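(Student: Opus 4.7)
The plan is to apply Shearer's triangle-vs-independence-set theorem (\Cref{thm:shearer}) to the graph that records which pairs of strings are ``close''. Specifically, I would define the graph $G$ on vertex set $[K]$ where $\{i,j\}$ is an edge if and only if $\Delta(s_i, s_j) \le (\tfrac12 + \epsilon)\ell$. Then unordered triangles of $G$ are in bijection with unordered triples $\{s_i, s_j, s_k\}$ whose diameter is at most $(\tfrac12 + \epsilon)\ell$, so the goal is to lower bound the number of triangles of $G$ by $\epsilon K^3 / 4$.

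The bridge between Shearer and Lemma~\ref{lem:clique-dist-12} comes from bounding the independence number of $G$. An independent set in $G$ of size $K'$ is exactly a collection of $K'$ strings that are pairwise at distance strictly greater than $(\tfrac12 + \epsilon)\ell$. By \Cref{lem:clique-dist-12}, such a collection must satisfy $\tfrac{1}{2(K'-1)} > \epsilon$, equivalently $K' < 1 + \tfrac{1}{2\epsilon}$. So $G$ has no independent set of size $1 + \tfrac{1}{2\epsilon}$, and in particular no independent set of size $\tfrac{1}{\epsilon}$ once $\epsilon \le \tfrac12$ (since then $\tfrac{1}{\epsilon} \ge 1 + \tfrac{1}{2\epsilon}$).

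Now I would argue by contrapositive. Suppose for contradiction that $G$ contains fewer than $\epsilon K^3/4$ triangles. Applying \Cref{thm:shearer} with $\delta = \epsilon/4$, valid once $\epsilon$ is small enough that $\epsilon/4 < \delta_0$, yields an independent set in $G$ of size $\tfrac{1}{4\delta} = \tfrac{1}{\epsilon}$. This contradicts the independence-number bound above, which completes the proof.

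The argument is essentially a one-line combination of the two stated results, so there is no real obstacle; the only care required is to pick the parameter $\delta$ in Shearer's theorem so that the resulting independent-set size strictly exceeds the $1 + \tfrac{1}{2\epsilon}$ bound from \Cref{lem:clique-dist-12}, and to note that ``small enough $\epsilon$'' in the statement absorbs both the requirement $\epsilon \le \tfrac12$ and the requirement $\epsilon/4 < \delta_0$ inherited from Shearer's theorem.
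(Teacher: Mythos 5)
Your proposal is correct and takes essentially the same route as the paper: both build the graph on the $K$ strings with edges for pairs at distance at most $(\tfrac12+\epsilon)\ell$, invoke Shearer's theorem (\Cref{thm:shearer}) with $\delta=\epsilon/4$ to extract a size-$\tfrac1\epsilon$ independent set if there were too few triangles, and contradict \Cref{lem:clique-dist-12}. You've merely spelled out the arithmetic ($\epsilon \le \tfrac12$ forcing $\tfrac1\epsilon \ge 1 + \tfrac{1}{2\epsilon}$) that the paper leaves implicit in the phrase ``which by Lemma~\ref{lem:clique-dist-12} is not possible.''
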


\begin{proof}
    Consider the graph with the $K$ strings as vertices, such that there is an edge between $s_i$ and $s_j$ if and only if $\Delta(s_i, s_j) \le \left( \frac12 + \epsilon \right) \cdot \ell$. If there are fewer than $\epsilon K^3/4$ triangles, then by Theorem~\ref{thm:shearer} there is an independent set of size $1/\epsilon$, which by Lemma~\ref{lem:clique-dist-12} is not possible.
\end{proof}

\begin{theorem}(Turan, \cite{Turan41}) \label{thm:turan}
    For any $m \in \bbN$, any graph on any vertices with less than $\frac{n^2}{2m}$ edges has an independent set of size $m$.
\end{theorem}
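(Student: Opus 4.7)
The statement is a well-known form of Turán's theorem, and the plan is to derive it from the classical Caro-Wei lower bound on the independence number, together with one application of Cauchy-Schwarz.

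First, I would establish the Caro-Wei bound: for any graph $G = (V, E)$, one has $\alpha(G) \ge \sum_{v \in V} \frac{1}{d(v)+1}$. The argument is probabilistic: pick a uniformly random total ordering $\sigma$ on $V$, and let $I_\sigma := \{v \in V : v \text{ precedes every neighbor of } v \text{ in } \sigma\}$. The set $I_\sigma$ is independent, because for any edge $(u,v)$ at most one of $u, v$ can precede all of its own neighbors. Furthermore, $\Pr[v \in I_\sigma] = \frac{1}{d(v)+1}$, since $v$ must appear first among the $d(v)+1$ vertices in $\{v\} \cup N(v)$. Hence $\mathbb{E}|I_\sigma| = \sum_v \frac{1}{d(v)+1}$, and some realization of $\sigma$ achieves at least this value, yielding an independent set of that size.

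Next, I would apply Cauchy-Schwarz in the form $\left( \sum_v 1 \right)^2 \le \left( \sum_v \frac{1}{d(v)+1} \right) \left( \sum_v (d(v)+1) \right)$, which gives
\[
    \alpha(G) \ge \sum_v \frac{1}{d(v)+1} \ge \frac{n^2}{\sum_v (d(v)+1)} = \frac{n^2}{2|E|+n}.
\]
Plugging in the hypothesis $|E| < n^2/(2m)$ then yields $\alpha(G) > \frac{n^2}{n^2/m + n} = \frac{nm}{n+m}$. A direct rearrangement shows $\frac{nm}{n+m} \ge m - 1$ precisely when $n \ge m(m-1)$, so under this mild size condition the strict inequality $\alpha(G) > m-1$ forces the integer $\alpha(G)$ to be at least $m$, as claimed.

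The only nontrivial ingredient is the Caro-Wei bound itself, which admits the elegant one-line random-ordering proof sketched above; everything else is routine inequality manipulation. Since \Cref{thm:turan} is invoked in the paper with the vertex count $n$ growing to infinity while $m$ stays bounded (see the application inside \Cref{cor:shearer}), the size restriction $n \ge m(m-1)$ is harmless, and I do not anticipate any real obstacle.
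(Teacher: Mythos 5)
The paper does not prove this statement at all — it is the classical Tur\'an bound, cited directly from~\cite{Turan41} — so there is no ``paper proof'' to compare against. Your Caro--Wei plus Cauchy--Schwarz derivation of $\alpha(G) \ge n^2/(2|E|+n)$ is correct, standard, and complete. You are also right to flag the implicit size condition: as literally stated the theorem is false (e.g.\ a single vertex has $0 < \frac{1}{2m}$ edges but no independent set of size $m \ge 2$), and $n \ge m(m-1)$ is exactly the threshold at which the clean hypothesis $|E| < n^2/(2m)$ implies the sharp Tur\'an hypothesis $|E| < \frac{n^2}{2(m-1)} - \frac{n}{2}$. Since the paper applies the theorem only with $n = K \to \infty$ and $m = 1/\epsilon$ fixed (inside \Cref{cor:turan}), this is harmless, as you observe. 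No gaps.
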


\begin{corollary} \label{cor:turan}
    For small enough $\epsilon > 0$ and for any collection of $K$ strings $s_1, \dots, s_K \in \{ 0, 1 \}^{\ell}$, there are at least $\epsilon K^2/2$ unordered pairs of strings $(s_i, s_j)$ such that $\Delta(s_i, s_j) \le \left( \frac12 + \epsilon \right) \cdot \ell$.
\end{corollary}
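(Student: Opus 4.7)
The plan is to mirror the structure of the proof of Corollary~\ref{cor:shearer} above, but substitute Turán's theorem (Theorem~\ref{thm:turan}) for Shearer's. Specifically, I would form the graph $G$ on vertex set $\{s_1, \dots, s_K\}$ where $s_i$ and $s_j$ are adjacent exactly when $\Delta(s_i, s_j) \le (\tfrac12 + \epsilon)\ell$, so that the desired pairs are exactly the edges of $G$. The goal is then to show that $G$ has at least $\epsilon K^2/2$ edges.

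Suppose for contradiction that $G$ has fewer than $\epsilon K^2/2$ edges. The plan is to apply Theorem~\ref{thm:turan} with $m = \lceil 1/\epsilon \rceil$, so that $\frac{K^2}{2m} \geq \frac{\epsilon K^2}{2}$, giving an independent set $S \subseteq \{s_1, \dots, s_K\}$ of size $m \geq 1/\epsilon$. Because $S$ is independent in $G$, every pair of strings in $S$ has Hamming distance strictly greater than $(\tfrac12 + \epsilon)\ell$. I would then apply Lemma~\ref{lem:clique-dist-12} to $S$ (viewed as $K' = |S| \geq 1/\epsilon$ strings in $\{0,1\}^\ell$): it yields two strings in $S$ with Hamming distance at most
\[
\left(\frac12 + \frac{1}{2(K' - 1)}\right)\ell \;\leq\; \left(\frac12 + \frac{1}{2(1/\epsilon - 1)}\right)\ell \;=\; \left(\frac12 + \frac{\epsilon}{2(1-\epsilon)}\right)\ell.
\]
For $\epsilon$ small enough (certainly $\epsilon < 1/2$), we have $\tfrac{\epsilon}{2(1-\epsilon)} < \epsilon$, so this upper bound is strictly less than $(\tfrac12 + \epsilon)\ell$, contradicting independence of $S$ in $G$.

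There is no real obstacle here; the argument is a direct two-line combination of the already-stated lemmas in complete parallel with the Shearer-based Corollary~\ref{cor:shearer}. The only things to double-check are (i) that the independent-set size guaranteed by Turán is at least $1/\epsilon$ (which is why I take $m = \lceil 1/\epsilon \rceil$ rather than exactly $1/\epsilon$, since $1/\epsilon$ may not be an integer), and (ii) that the slack in Lemma~\ref{lem:clique-dist-12} is enough to beat the threshold $(\tfrac12 + \epsilon)\ell$, which holds for all sufficiently small $\epsilon$ as computed above.
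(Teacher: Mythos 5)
Your overall route is exactly the paper's: form the threshold graph on $\{s_1,\dots,s_K\}$, suppose the edge count is below $\epsilon K^2/2$, extract a large independent set via Tur\'an, and contradict independence with Lemma~\ref{lem:clique-dist-12}. However, your choice $m=\lceil 1/\epsilon\rceil$ and the accompanying inequality are in the wrong direction. To invoke Theorem~\ref{thm:turan} you need the number of edges to be \emph{below} $\frac{K^2}{2m}$; since you only know it is below $\frac{\epsilon K^2}{2}$, you need $\frac{\epsilon K^2}{2} \le \frac{K^2}{2m}$, i.e.\ $m \le 1/\epsilon$. With $m=\lceil 1/\epsilon\rceil \ge 1/\epsilon$ the inequality $\frac{K^2}{2m} \ge \frac{\epsilon K^2}{2}$ you assert actually fails (strictly so unless $1/\epsilon$ is an integer), so Tur\'an's hypothesis is not verified as written.

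The fix is to take $m=\lfloor 1/\epsilon\rfloor$. Then Tur\'an applies and yields an independent set of size $m \ge 1/\epsilon - 1$, and Lemma~\ref{lem:clique-dist-12} produces two of these strings at Hamming distance at most $\bigl(\tfrac12 + \tfrac{1}{2(m-1)}\bigr)\ell \le \bigl(\tfrac12 + \tfrac{1}{2(1/\epsilon - 2)}\bigr)\ell$, which is at most $(\tfrac12+\epsilon)\ell$ once $\epsilon \le 1/4$ --- still a contradiction, and comfortably within the ``small enough $\epsilon$'' hypothesis. So the argument survives with the rounding corrected. (The paper's own one-line proof quietly treats $1/\epsilon$ as an integer and never confronts the rounding direction; you were right to flag the integrality issue, you just rounded the wrong way.)
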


\begin{proof}
    Consider the graph with the $K$ strings as vertices, such that there is an edge between $s_i$ and $s_j$ if $\Delta(s_i, s_j) \le \left( \frac12 + \epsilon \right) \cdot \ell$. If there are fewer than $\epsilon K^2/2$ unordered pairs of strings $(s_i, s_j)$ for which $\Delta(s_i, s_j) \le \left( \frac12 + \epsilon \right) \cdot \ell$, then by Theorem~\ref{thm:turan} there is an independent set of size $\frac1\epsilon$, which by Lemma~\ref{lem:clique-dist-12} is not possible.
\end{proof}

\begin{theorem}(Ramsey, \cite{Ramsey87}) \label{thm:ramsey}
    For any $m,n \in \bbN$, there exists $R(m,n) \le \binom{m+n-2}{n-1} \in \bbN$ such that any undirected graph with at least $R(m, n)$ vertices has either a clique on $m$ verticles or an independent set on $n$ vertices.
\end{theorem}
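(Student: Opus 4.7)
The plan is to prove Ramsey's theorem by induction on $m+n$, establishing the classical recurrence $R(m,n) \le R(m-1,n) + R(m,n-1)$ and then converting it into the binomial bound via Pascal's identity.

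For the base cases, when $m=1$ any single vertex is a clique of size $1$, so $R(1,n)=1$; similarly $R(m,1)=1$. In both cases the claimed bound $\binom{m+n-2}{n-1}$ equals $1$, so the statement holds. For the inductive step, suppose the theorem is known for all pairs $(m',n')$ with $m'+n' < m+n$, and set $N = R(m-1,n) + R(m,n-1)$. Consider any graph $G$ on at least $N$ vertices, pick an arbitrary vertex $v$, and partition the remaining $N-1$ vertices into the set $A$ of neighbors of $v$ and the set $B$ of non-neighbors. Since $|A|+|B|=N-1$, we must have either $|A| \ge R(m-1,n)$ or $|B| \ge R(m,n-1)$.

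In the first case, by the inductive hypothesis the subgraph on $A$ contains either an independent set of size $n$ (which is what we want), or a clique of size $m-1$; in the latter case, adjoining $v$ (which is adjacent to every vertex of $A$) produces a clique of size $m$. The second case is symmetric: by induction the subgraph on $B$ contains either a clique of size $m$ or an independent set of size $n-1$, and in the latter case we adjoin $v$ to get an independent set of size $n$. This establishes the recurrence.

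To conclude, I would apply Pascal's identity to the inductive bound:
\[
R(m,n) \le R(m-1,n) + R(m,n-1) \le \binom{m+n-3}{n-1} + \binom{m+n-3}{n-2} = \binom{m+n-2}{n-1}.
\]
There is no real obstacle here; the argument is entirely classical. The only thing worth being careful about is ensuring the base cases $(m,n)$ with $m=1$ or $n=1$ are handled so that the recursion bottoms out, and that the Pascal step is invoked with both upper indices equal to $m+n-3$, which is why the recurrence must split one unit off each of $m$ and $n$ separately.
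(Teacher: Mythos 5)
Your proof is correct: it is the classical Erd\H{o}s--Szekeres argument establishing the recurrence $R(m,n)\le R(m-1,n)+R(m,n-1)$ by splitting the neighborhood and non-neighborhood of a fixed vertex, together with the observation (via Pascal's identity) that the binomial coefficients $\binom{m+n-2}{n-1}$ satisfy the same recurrence with matching base cases. The paper itself gives no proof of this statement --- it is cited directly from the literature --- so there is no in-paper argument to compare against; your write-up is a complete and standard proof of the quoted bound.
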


\section{Impossibility Bound on Maximal Noise Resilience of $\iECC$} \label{sec:13/47}

In this section, we will present our main result, that for any non-adaptive $\iECC$, there is some attack consisting of at most $\frac{13}{47}$ corruptions such that Bob cannot guess Alice's input $x$ correctly with probability better than $\frac12$. 

\begin{restatable}{theorem}{main} \label{thm:13/47}
    For sufficiently small $\epsilon > 0$, then for all $k>100\epsilon^{-4}$, no $\iECC$ for $x\in \{0,1\}^k$ is resilient to more than $\frac{13}{47}+2\epsilon$ fraction of errors with probability greater than $\frac12$.
\end{restatable}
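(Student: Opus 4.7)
The plan is to demonstrate, for any non-adaptive $\iECC$ protocol $\pi$ with sufficiently large input length $k$, an adversarial attack with total corruption budget at most $\left(\frac{13}{47} + 2\epsilon\right)|\pi|$ that causes Bob to fail to recover Alice's input $x$ with probability at least $\frac12$. Write $N = |\pi|$, and let $aN$ and $bN$ be the numbers of rounds in which Alice and Bob speak, with $a + b = 1$. To confuse Bob between two inputs $x_0 \neq x_1$, I will construct two fake attack transcripts $\sigma_0, \sigma_1$ so that Bob's view is identical in the executions $(x_0, \sigma_0)$ and $(x_1, \sigma_1)$; a successful attack must bound the maximum of the two corruption costs, each being $\Delta(A_i, V_B) + \Delta(B, V_A^i)$, where $A_i, V_A^i$ are Alice's sent and received bits in execution $i$, and $B, V_B$ are the common Bob sent and received bits.

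\textbf{Baseline attacks.} I would begin by recasting the two known attacks from \cite{GuptaKZ22} in this framework. \emph{Attack~1} (noiseless feedback) sets $V_A^0 = V_A^1 = B$ so Alice's feedback is uncorrupted, and $V_B = A_1$ so Bob sees an execution of $x_1$; a Berlekamp-style weighted-decoding argument then yields $(x_0, x_1)$ making the attack cost at most $\frac{a}{3}N$. \emph{Attack~2} (random Bob) picks a string $R$ and sets $V_A^0 = V_A^1 = R$, making Alice's transcripts $A_x^R$ an effective $2^k$-element code; Corollary \ref{cor:turan} yields $x_0, x_1$ with $\Delta(A_{x_0}^R, A_{x_1}^R) \le \left(\frac12 + \epsilon\right) aN$, and setting $V_B$ as their midpoint gives cost at most $\left(\frac14 + \frac{\epsilon}{2}\right) aN + \frac12 bN$. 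These two attack costs equalize at $a = \frac67$, $b = \frac17$, yielding the prior bound of $\frac27$.

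\textbf{The third attack and the balance.} To push below $\frac27$, I would introduce a third attack tailored to the regime $a \approx \frac67$. The natural candidate is a three-way confusion: using Corollary \ref{cor:shearer} to locate a triple of inputs whose Alice-transcripts under a common fake feedback have diameter at most $\left(\frac12 + \epsilon\right) aN$, and designing a joint attack that makes Bob's view simultaneously plausible for all three inputs while corrupting only a thin sliver of Bob's channel. An alternative is a phased/hybrid attack where the protocol is split at a carefully chosen cutoff round $r^\ast$: in the prefix the adversary runs a mild corruption to narrow Bob to a small ambiguity class (extracted via Theorems \ref{thm:ramsey} and \ref{thm:turan}), and in the suffix she runs either a feedback-style attack on Alice's remaining bits or a randomize-Bob attack on Bob's remaining bits, depending on the surviving ambiguity structure. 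Whichever variant is chosen, I would aim to bound its cost by a linear expression $c_1 a + c_2 b + O(\epsilon)$; equating the three attack costs and solving the resulting system in $(a,b)$ with $a+b=1$ should produce a balance point whose common value is exactly $\frac{13}{47}$.

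\textbf{The main obstacle.} The principal difficulty is designing this third attack and proving that its cost strictly improves on $\min(\text{Attack 1}, \text{Attack 2})$ near $(a,b) = (\tfrac67, \tfrac17)$. A naive three-way confusion from Corollary \ref{cor:shearer} does not obviously beat a two-way confusion, because three strings with pairwise diameter $d$ can already force their Hamming $1$-center to radius $d/2$, matching the bound Corollary \ref{cor:turan} gives for pairs. The genuine saving must come from a more delicate combinatorial argument --- for instance, simultaneously controlling Alice's transcripts and Bob's responses across the three fake executions so that a single choice of $V_B$ and $V_A$ can be reused cheaply, or iterating the triples bound with a Ramsey/Tur\'an extraction to produce a highly structured family of candidate inputs along which the cost of the feedback piece and the randomization piece partly cancel. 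Quantifying this structural saving, and then verifying that the resulting three-attack optimization balances precisely at $\frac{13}{47}$, is where the bulk of the technical work will lie.
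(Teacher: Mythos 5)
Your setup and the two baseline attacks are recognized correctly, and you rightly conclude that a third attack is needed and that a "phased/hybrid" structure is the most promising route. You also correctly diagnose why a naive Shearer-based three-way confusion by itself gives nothing new. However, there is a genuine gap: the proposal does not specify the new attack(s), and the parametrization you propose is too coarse to express what is actually required.

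Concretely, the paper fixes a split of the protocol at round $\frac{21}{47}n$ and tracks four parameters $A_1, B_1, A_2, B_2$ (Alice/Bob bits in each section), not just a global $(a,b)$. It then runs three attacks. The first is the pure Berlekamp attack across both sections, costing $\frac13 A_1 + \frac13 A_2$. The second is a phased attack: in section~1, the random-Bob strategy combined with Corollary~\ref{cor:shearer} is used not as a self-contained confusion but to reduce Bob's ambiguity to \emph{three} inputs at cost roughly $\frac14 A_1 + \frac12 B_1$; then in section~2 a Berlekamp-style attack on those three inputs narrows to two at cost $\frac13 A_2$. This is precisely the role the three-way Shearer bound plays: it feeds a triple into the $\frac13$ feedback attack, which is where the saving comes from, not from the triple itself. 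The third attack is a new asymmetric one that you do not anticipate: in section~1 the adversary makes Bob's received bits look like an execution with $x_1$ while leaving Alice's received bits honest, so the cost is $0$ if Alice actually has $x_1$ and roughly $\frac12 (A_1 + B_1)$ (via a Ramsey-extracted clique) if she has $x_2$; then in section~2 a random-Bob attack confuses $x_1$ from $x_2$ with a cost that is likewise asymmetric between the two inputs. The final bound $\frac{13}{47}$ comes from showing that a fixed convex combination of the three attack costs, taken as linear forms in $(A_1,B_1,A_2,B_2)$ with the section sizes fixed, equals $\frac{13}{47}$ (Lemma~\ref{lem:ineq}); this is a four-variable linear-programming argument, not an equalization at a single balance point $(a,b)$ with $a+b=1$. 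In short, your proposal identifies the right obstacle but stops at exactly the place where the substance of the proof lives: the construction of the two phased attacks (one of which crucially uses Shearer to feed a triple into a subsequent Berlekamp step, the other of which is asymmetric in corruption cost across inputs) and the multi-parameter optimization that ties them together.
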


The rest of this section will be devoted to the proof of this theorem. Throughout this section, Alice's input will always be denoted $x \in \{ 0, 1 \}^k$. The length of the $\iECC$ will be denoted by $n$. 

At a high level, our proof will proceed as follows. We will split any candidate protocol into two sections, the first consisting of the first $\frac{21}{47} n$ rounds of the protocol, and the second consisting of the remaining $\frac{26}{47} n$ rounds of the protocol. In the first section, we denote the number of bits that Alice sends by $A_1$, and the number that Bob sends by $B_1$. Likewise, in the second section, we denote the number of bits that Alice and Bob send by $A_2$ and $B_2$ respectively. We will present three attacks in Sections~\ref{sec:attack1},~\ref{sec:attack2}, and~\ref{sec:attack3} such that depending on the values of $A_1, B_1, A_2, B_2$, at least one attack is guaranteed to succeed while using at most $\frac{13}{47}$ corruptions. 

Throughout this section, a \emph{transcript} is the sequence of bits that is received by either of the parties. Note that since the adversary may corrupt messages, the transcript may be different than what was sent by Alice and Bob. We say that an attack \emph{succeeds} with $\alpha$ corruption if there exist two inputs $x_1, x_2 \in \{ 0, 1 \}^k$ along with respective strategies corrupting at most $\alpha n$ bits such that Bob's view of the transcript in both cases is identical. Then, Bob cannot guess Alice's true value of $x \in \{ x_1, x_2 \}$ with probability better than $\frac12$.


\subsection{Attack 1} \label{sec:attack1}

In the first attack, the adversary behaves the same on both sections of the protocol. She corrupts Alice's bits while leaving Bob's untouched, such that there exist two inputs for which at most of $\frac13$ of Alice's communication is corrupted. We remark that this attack has been known since~\cite{Berlekamp64}.

\begin{lemma} \label{lem:13}
    For any protocol consisting of $A$ bits from Alice and $B$ bits from Bob, and for any three possible inputs $x_1, x_2, x_3$, there exists two of the three inputs $y_1, y_2 \in \{ x_1, x_2, x_3 \}$ and a transcript $T \in \{ 0, 1 \}^{A+B}$ such that the adversary can corrupt at most $\frac13 A +  1$ bits so that the protocol transcript is $T$ in both the case Alice has $y_1$ or $y_2$.
\end{lemma}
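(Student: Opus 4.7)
The plan is to use the classical majority-style argument from feedback coding, adapted to simultaneously handle three inputs. The adversary will never touch any of Bob's $B$ sent bits. Under this constraint, if the adversary succeeds in forcing Bob's received transcript to be the same string $T$ under two inputs $y_1, y_2$, then Bob's own sent bits are identical in both executions (Bob's view is the same, so his next action is the same), and so Alice receives the same bits from Bob in both executions. Consequently, Alice's sent bit in round $t$ under input $x_i$ is a well-defined function of $x_i$ alone, which I will denote $a_i^{(t)}$, for each of the three candidate inputs $i \in \{1,2,3\}$ and each of the $A$ Alice-rounds $t$.

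Next, I would set $D_{ij} := |\{t \in [A] : a_i^{(t)} \ne a_j^{(t)}\}|$ and make the simple combinatorial observation that in any round $t$, at most two of the three pairs among $\{1,2\},\{1,3\},\{2,3\}$ can disagree (with only two bit values, two of $a_1^{(t)}, a_2^{(t)}, a_3^{(t)}$ must coincide). Therefore
\[
D_{12} + D_{13} + D_{23} \le 2A,
\]
and by averaging some pair $\{i,j\}$ satisfies $D_{ij} \le 2A/3$. I pick that pair as $\{y_1, y_2\}$.

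Finally I would build the target transcript $T$ explicitly. On the $B$ Bob-rounds, $T$ equals Bob's uncorrupted bit (the same in both executions by the invariant above). On the $A$ Alice-rounds: in rounds where $a_i^{(t)} = a_j^{(t)}$, set $T_t$ to that common value (zero corruption in either case); in the at most $D_{ij}$ rounds where they disagree, split the rounds into two halves and set $T_t = a_i^{(t)}$ on one half (costing case $y_2$ one corruption per round) and $T_t = a_j^{(t)}$ on the other (costing case $y_1$ one corruption per round). The total corruption incurred on either side is at most $\lceil D_{ij}/2 \rceil \le \lceil A/3 \rceil \le \tfrac13 A + 1$, as required.

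The only real subtlety is the well-definedness of $a_i^{(t)}$; this is what forces the invariant that Bob's bits are never corrupted. Everything else is a straightforward counting argument, so I do not expect any genuine obstacle: the proof is essentially a three-input, two-of-three pigeonhole version of Berlekamp's feedback attack.
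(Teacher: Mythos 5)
Your argument has a genuine circularity problem stemming from the claim that $a_i^{(t)}$ is ``a well-defined function of $x_i$ alone.'' In an interactive protocol, Alice's $t$-th bit depends on $x_i$ \emph{and} on the bits she has received from Bob so far; since Bob's bits are left uncorrupted, what Bob sends depends on the (corrupted) Alice bits he has received, which are exactly the Alice-round entries of $T$ that you are in the middle of constructing. So $a_i^{(t)}$ is a function of $x_i$ together with the prefix $T[1:t-1]$, and consequently the quantities $D_{ij}$, the identity of the ``good'' pair $\{y_1,y_2\}$, and the set of disagreement rounds all depend on the very transcript you propose to build afterward. Your plan ``first compute all $D_{ij}$, pick the pair with $D_{ij}\le 2A/3$, then set $T$ by splitting its disagreement rounds into two halves'' therefore cannot be carried out: changing $T$ on earlier Alice-rounds changes Bob's later responses and hence changes which later rounds are disagreement rounds and which pair wins the averaging. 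Moreover, ``split the at most $D_{ij}$ disagreement rounds into two halves'' is an offline operation that requires foreknowledge of which future rounds will disagree, which the adversary does not have.

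The paper's proof avoids this by being genuinely online. It fixes $T_t$ round by round: at each Alice-round $t$, it evaluates $a_t(x_1),a_t(x_2),a_t(x_3)$ from the already-determined prefix $T[1:t-1]$, initially sets $T_t = \maj(a_t(x_1),a_t(x_2),a_t(x_3))$ while maintaining running counters $\delta_t(x_i)$, and at the first round $t_0$ where the \emph{second-largest} counter reaches $\lceil A/3\rceil$ it commits to the pair $(y_1,y_2)$ with the two smallest counters and thereafter sets $T_t=a_t(y_2)$. Every decision is a function of the past, so there is no circularity, and the accounting (each majority-round charges at most one of the three counters, so $t_0\ge\sum_i\delta_{t_0}(x_i)$) yields $\delta_A(y_1),\delta_A(y_2)\le \frac13 A+1$. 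Your pointwise observation that at most two of the three pairs can disagree in any Alice-round (equivalently, the majority corruption charges at most one counter per round) is the right combinatorial heart of the argument, but to turn it into a valid attack you need an adaptive commitment rule in the style of the paper, not an offline averaging-then-splitting step.
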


\begin{proof}
    We define $T$ as follows. 
    \begin{itemize}
        \item None of Bob's bits will be corrupted, that is, Alice will receive every bit that Bob sends correctly.
        \item As for Alice's bits, the adversary will begin by corrupting Alice's $t$'th bit to $\maj(a_t(x_1), a_t(x_2), a_t(x_3))$, where $a_t(x_i)$ denotes Alice's $t$'th bit if she has input $x_i$. For any $t \in [A]$, we denote by $\delta_t(x_i)$ the number of bits the adversary has corrupted up until the $t$'th bit if Alice has input $x_i$. The adversary continues this attack until the second largest of $\delta_t(x_i)$ reaches $\left\lceil\frac13 A\right\rceil$ in round $t_0$, at which point she switches to the following strategy: letting $y_1, y_2, y_3$ denote the $x_i$ with the smallest, second smallest, and largest value of $\delta_{t_0}(x_i)$ (so that $\delta_{t_0}(y_2) = \lceil \frac13 A \rceil$), the adversary corrupts the remainder of Alice's bits to be $a_t(y_2)$, for $t_0 < t \le A$. 
    \end{itemize}

    If point $t_0$ never happens, then the smallest and second smallest $\delta_A(x_i)$ are less than $\left\lceil\frac13 A\right\rceil$, and we are done. Otherwise, we claim that both $\delta_A(y_1)$ and $\delta_A(y_2)$ are at most $\left\lceil \frac13 A \right\rceil\leq \frac13 A + 1$ by at the end of the protocol. Clearly, $\delta_A(y_2) = \left\lceil \frac13A \right\rceil$. As for $\delta_A(y_1)$, we have that 
        \begin{align*}
            \delta_A(y_1) &\le \delta_{t_0}(y_1) + (A - t_0) \\
            &\le \delta_{t_0}(y_1) + A - \left( \delta_{t_0}(y_1) + \delta_{t_0}(y_2) + \delta_{t_0}(y_3) \right) \\
            &= A - \delta_{t_0}(y_2) - \delta_{t_0}(y_3) \\
            &\le A- 2\cdot\left\lceil\frac13 A\right\rceil \\
            &\le \frac13 A,
        \end{align*}
        where we use that $t_0 \ge \delta_{t_0}(y_1) + \delta_{t_0}(y_2) + \delta_{t_0}(y_3)$ (which holds because before point $t_0$ we are always causing corruption to at most one of the three transcripts), and that at point $t_0$ it holds that $\delta_{t_0}(y_3) \ge \delta_{t_0}(y_2) = \left\lceil\frac13 A\right\rceil$. 
\end{proof}

The attack is stated below.

\attack{1}{
    Let $x_1, x_2, x_3 \in \{ 0, 1 \}^k$ be three of Alice's possible inputs. By Lemma~\ref{lem:13}, there exist $y_1, y_2 \in \{ x_1, x_2, x_3 \}$ and transcript $T \in \{ 0, 1 \}^n$ such that the adversary can corrupt at most $\frac13 (A_1 + A_2) +1$ bits to obtain transcript $T$ in both the case Alice has $y_1$ and if she has $y_2$. The adversary simply corrupts the protocol so that the resulting transcript is $T$. 
}

\begin{lemma} \label{lem:attack1}
    Attack~\ref{attack:1} succeeds with corrupting $ \frac13 A_1 + \frac13 A_2 + 1$ bits. 
\end{lemma}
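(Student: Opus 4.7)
The proof is essentially a direct invocation of Lemma~\ref{lem:13} applied to the entire protocol, with no new ideas required beyond what has already been set up.

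My plan is to first pick any three distinct inputs $x_1, x_2, x_3 \in \{0,1\}^k$ (which exist since $k > 100 \epsilon^{-4} \geq 3$). I then view the $\iECC$ as a single interactive protocol in which Alice sends a total of $A := A_1 + A_2$ bits and Bob sends a total of $B := B_1 + B_2$ bits, ignoring the artificial split into the first $\frac{21}{47}n$ rounds and the last $\frac{26}{47}n$ rounds (the split is only relevant for Attacks~2 and~3, not for this one). Applying Lemma~\ref{lem:13} to this protocol with the inputs $x_1, x_2, x_3$ then produces two inputs $y_1, y_2 \in \{x_1, x_2, x_3\}$ and a transcript $T \in \{0,1\}^{A+B}$ such that the adversary can, by corrupting at most $\tfrac{1}{3} A + 1 = \tfrac{1}{3}(A_1 + A_2) + 1$ of Alice's bits (and none of Bob's), cause the transcript observed by both parties to be exactly $T$ whether Alice's input is $y_1$ or $y_2$.

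Since Attack~\ref{attack:1} is defined to be precisely this procedure, it follows from the definition of what it means for an attack to ``succeed'' (given in the text: there exist two inputs $y_1, y_2$ and strategies for the adversary with the specified corruption budget such that Bob's view is identical in both cases) that Attack~\ref{attack:1} succeeds, using at most $\tfrac{1}{3}A_1 + \tfrac{1}{3}A_2 + 1$ corruptions. In particular, Bob cannot determine whether Alice's input was $y_1$ or $y_2$ with probability better than $\tfrac12$, which is exactly the conclusion claimed.

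There is no substantive obstacle here since Lemma~\ref{lem:13} has already done all of the real work; the only point worth stating explicitly is that Lemma~\ref{lem:13} applies as a black box to the full protocol even though later attacks will analyze the two sections separately, because the corruption budget $\tfrac{1}{3}(A_1 + A_2) + 1$ is additive across the two sections and the attack itself is completely oblivious to the split.
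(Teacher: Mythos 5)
Your proof is correct and matches the paper's argument exactly: both simply invoke Lemma~\ref{lem:13} on the full protocol with $A = A_1 + A_2$, and observe that the resulting corruption budget is $\frac13(A_1+A_2) + 1$. The extra remarks about the section split being irrelevant here and about $k \geq 3$ are fine but not needed.
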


\begin{proof}
    This follows immediately from Lemma~\ref{lem:13}: regardless of whether Alice has $y_1$ or $y_2$, the adversary is able to have Bob receive the same transcript, using $\frac13 \left( A_1 + A_2\right)+1$ bits of corruption.
\end{proof}
\subsection{Attack 2} \label{sec:attack2}

In our second attack, the adversary behaves differently in the two sections of the protocol. In the first section, the adversary essentially causes Bob's feedback to look random, so that Alice can do no better than to send a distance $\frac12$ error-correcting code. This allows the adversary to corrupt $\frac14$ of Alice's bits during this first section so that Bob cannot distinguish between three inputs. Then, in the second section, we use Lemma~\ref{lem:13} from the previous section to show that the adversary has a strategy corrupting only $\frac13 A_2$ bits to confuse Bob between two of the remaining three inputs.

To argue that the adversary can perform her attack in the first section, we need the following lemma.

\begin{lemma} \label{lem:14-12}
    For any $0<\epsilon<0.1$, suppose Alice has $K$ possible inputs where $K > (4/\epsilon)^{1/3}$. 
    Then for any protocol consisting of $A$ bits from Alice and $B$ bits from Bob where $A + B \ge \frac{3 \log (1/\epsilon)}{\epsilon^3}$, there exist three inputs $x_1, x_2, x_3$ and a transcript $T$ such that regardless of which of $x_1, x_2, x_3$ Alice has as input, the adversary can corrupt at most $\left( \frac14 + \frac{3\epsilon}2 \right) \cdot A + \left( \frac12 + \epsilon \right) \cdot B + 1$ bits so that the protocol transcript is $T$.
\end{lemma}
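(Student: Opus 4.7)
The plan is to have the adversary scramble Bob's bits so that Alice sees a common ``fake'' feedback $F \in \{0,1\}^B$ regardless of her input, thereby reducing the interactive protocol to a non-interactive code of length $A$. We then apply the standard $\frac14$-type ECC argument to this induced code, extended to three codewords.

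More concretely, for each $F \in \{0,1\}^B$ and each input $x$, let $c_x^F := a(x,F) \in \{0,1\}^A$ denote Alice's transmitted string when her received feedback equals $F$. Fixing $F$ makes $C^F := \{c_x^F\}_x$ a non-interactive code of (up to) $K$ codewords. Since $K^3 > 4/\epsilon$, \Cref{cor:shearer} applied to $C^F$ guarantees, for every $F$, the existence of a triple $(x_1,x_2,x_3)$ whose induced codewords have maximum pairwise Hamming distance $c \le (\tfrac12 + \epsilon) A$.

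Next, I would construct a ``balanced'' common target $T_A$ for such a triple. Letting $n_i$ be the number of coordinates at which $c_{x_i}^F$ is the odd one out among the three codewords, setting $T_A$ to the majority at all ``not-all-equal'' positions yields $\Delta(T_A,c_{x_i}^F) = n_i$; deviating at $\alpha_i$ positions of type $i$ gives $\Delta(T_A,c_{x_i}^F) = n_i - \alpha_i + \alpha_j + \alpha_k$. Choosing $\alpha_i = (n_i - \min_j n_j)/2$ equalizes the three distances, yielding (using $n_i + n_j = \Delta(c_{x_i}^F,c_{x_j}^F)$)
\[
\max_i \Delta(T_A, c_{x_i}^F) \;=\; \frac{c}{2} \;\le\; \Bigl(\tfrac14 + \tfrac{\epsilon}{2}\Bigr) A ,
\]
which is comfortably within $(\tfrac14 + \tfrac{3\epsilon}{2})A + 1$, with the slack accommodating integer rounding of the $\alpha_i$'s.

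Finally, I would set $T_B := b(T_A)$, so that $T = (T_A, T_B)$ is consistent with Bob. For each input $x_i$, the adversary uses $F$ as Alice's fake feedback; the total corruption is $\Delta(c_{x_i}^F, T_A) + \Delta(T_B, F) \le (\tfrac14 + \tfrac{\epsilon}{2}) A + \Delta(T_B, F)$. For uniformly random $F$ and any fixed string $T_B^*$, Chernoff gives $\Pr_F[\Delta(T_B^*, F) > (\tfrac12 + \epsilon) B] \le \exp(-2\epsilon^2 B)$, and the hypothesis $A + B \ge 3\log(1/\epsilon)/\epsilon^3$ is used to make this concentration quantitatively sufficient. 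So some $F$ will yield a close triple (which holds for every $F$) together with $\Delta(T_B, F) \le (\tfrac12 + \epsilon) B + 1$, giving the claimed bound.

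\textbf{The main obstacle} is the last step: because $T_A$ (and hence $T_B = b(T_A)$) depends on $F$ through the Shearer-triple selection, one cannot directly apply Chernoff's inequality to a fixed target $T_B$. I expect to resolve this either by a joint counting argument over $(F, \text{triple})$-pairs achieving small cost and pigeonholing to extract a good $F$, or by allowing the adversary to use input-dependent fake feedback $F_i$ for each of the three inputs, trading off Alice-side and Bob-side corruption so that the \emph{total} per-input budget is met even when $F$ and $T_B$ are not decoupled. The slack of $\epsilon A$ between $(\tfrac14 + \tfrac{\epsilon}{2})A$ (balancing bound) and $(\tfrac14 + \tfrac{3\epsilon}{2})A$ (lemma bound) should absorb any lossy steps in this argument.
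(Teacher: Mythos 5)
You correctly identify both the overall strategy (scramble Bob's feedback to reduce to a non-interactive code, apply a Shearer-type bound to get a close triple, balance a common Alice-side target, then control Bob-side cost by Chernoff) and, crucially, the genuine obstacle: the triple returned by \Cref{cor:shearer} depends on the fake feedback $F$, so the induced transcript $T_B$ is not a fixed string to which Chernoff can be applied. As written, the proof is incomplete at exactly the point you flag.

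The paper's resolution is your ``Option 1'' made precise, and it is worth spelling out because it is the whole content of the hard case. One averages \Cref{cor:shearer} \emph{jointly} over unordered triples $(y_1,y_2,y_3)$ and uniform $b\in\{0,1\}^B$: for every $b$ at least an $\epsilon/4$-fraction of triples are good, so there exists a \emph{single fixed} triple $(x_1,x_2,x_3)$ (chosen before committing to $b$) such that
\[
\Pr_{b}\!\left[\Diam\bigl(a(x_1;b),a(x_2;b),a(x_3;b)\bigr)\le \bigl(\tfrac12+\epsilon\bigr)A\right]\ \ge\ \tfrac{\epsilon}{4}.
\]
Then the adversary flips each of Bob's bits independently with probability $\tfrac12$, which makes Alice's received string $b$ exactly uniform, and the Alice-side target $a(x_1,x_2,x_3;b)$ is a deterministic (online-computable) function of $b$. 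A union bound
\[
\Pr_b[b\ \text{good}] - \Pr_b\!\left[\Delta(b,\beta)>\bigl(\tfrac12+\epsilon\bigr)B\right]\ \ge\ \tfrac{\epsilon}{4} - e^{-\Omega(\epsilon^2 B)} > 0
\]
then produces a single $b$ for which both the Alice-side and Bob-side costs are simultaneously small. Because the triple was fixed before $b$, the dependence you were worried about disappears. Your ``Option 2'' (input-dependent fake feedback $F_i$) is not what the paper does and would make things harder rather than easier: you would need the \emph{same} Bob string $\beta$ to be close to all three distinct $F_i$'s.

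One further gap: the Chernoff step needs $B$ itself to be $\Omega(\log(1/\epsilon)/\epsilon^2)$, and the hypothesis $A+B\ge 3\log(1/\epsilon)/\epsilon^3$ does not give this when $B$ is tiny compared to $A$. The paper handles this by a case split: if $B\le\epsilon(A+B)$, simply corrupt all of Bob's bits to an arbitrary fixed $F$ (cost $\le \epsilon A+\epsilon B$, which fits in the $\tfrac{3\epsilon}{2}A$ and $\epsilon B$ slack), and only run the probabilistic argument when $B>\epsilon(A+B)$, in which case $B>3\log(1/\epsilon)/\epsilon^2$ and the exponential term is dominated by $\epsilon/4$. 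Your sketch needs this split to be correct.

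Finally, your ``balance the three distances'' construction of $T_A$ with $\alpha_i=(n_i-\min_j n_j)/2$ giving $\max_i\Delta(T_A,c_{x_i}^F)=c/2$ is a clean, valid alternative to the paper's adaptive threshold-switching definition of $a(y_1,y_2,y_3;b)$ (\Cref{claim:14}); both yield roughly $(\tfrac14+\tfrac{\epsilon}{2})A$. Since the adversary knows both parties' strategies in advance, a precomputed non-online target string is perfectly fine for an online corruption attack, so this part is genuinely interchangeable.
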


\begin{proof}
    Let $\Lambda$ denote the set of Alice's possible $K$ inputs.
    Suppose that the rounds of the $\iECC$ are structured so that when Alice is sending her $t$'th bit, she has seen $\gamma(t)$ bits from Bob so far. For $b \in \{ 0, 1 \}^{B}$ ($b$ can be thought of as what Alice receives from Bob throughout the entire protocol), we define the strings $a(y; b), a(y_1, y_2, y_3; b) \in \{ 0, 1 \}^{A}$ as follows:
    \begin{itemize}
        \item 
            We define $a(y; b)[t]$ to be the bit that Alice sends for her $t$'th bit if she has input $y$ and has seen $b[1:\gamma(t)]$ from Bob so far. We remark that we've abused notation here: Alice's message $a(y; b)[t]$ depends only on $b[1:\gamma(t)]$ and not on the rest of $b$, but we include all of $b$ for ease of notation.
        \item 
            For each (unordered) triple $(y_1, y_2, y_3)$ of different inputs, we define $a(y_1, y_2, y_3; b)[t]$ as follows. 
            \begin{itemize} 
                \item If $\max_{i \in \{ 1, 2, 3 \}} \Delta(a(y_1, y_2, y_3; b)[1:t-1], a(y_i)[1:t-1]) \le \left( \frac14 + \frac\epsilon2 \right) \cdot A - 1$, we set $a(y_1, y_2, y_3; b)[t] = \maj(a(y_1; b)[t], a(y_2; b)[t], a(y_3; b)[t])$. 
                \item Otherwise, let $i \in \{ 1,2,3 \}$ be such that $\Delta(a(y_1, y_2, y_3; b)[1:t-1], a(y_i)[1:t-1]) \ge \left( \frac14 + \frac\epsilon2 \right) \cdot A +1$ (if there's more than one value of $i$, take any), and set $a(y_1, y_2, y_3; b)[t] = a(y_i; b)[t]$.
            \end{itemize}
    \end{itemize}

    \begin{claim} \label{claim:14}
        For any $b \in \{ 0, 1 \}^{B}$, and for any three inputs $y_1, y_2, y_3 \in \Lambda$, if $\Diam(a(y_1; b), a(y_2; b), a(y_3; b)) \le \left( \frac12 + \epsilon \right) \cdot A$, then $\max_{i \in [3]} \Delta(a(y_1, y_2, y_3; b), a(y_i; b)) \le \left( \frac14 + \frac12\epsilon \right) \cdot A + 1$.
    \end{claim}

    \begin{proof}
        Suppose $a(y_1; b), a(y_2; b), a(y_3; b)$ are such that $\Diam(a(y_1; b), a(y_2; b), a(y_3; b)) \le \left( \frac12 + \epsilon \right) \cdot A$. Let $T_{123}$ be the indices $t \in [A]$ on which $a(y_1; b)[t] = a(y_2; b)[t] = a(y_3; b)[t]$. Also let $T_{12}$ be the indices $t$ for which $a(y_1; b)[t] = a(y_2; b)[t] \not= a(y_3; b)[t]$, and define $T_{23}$ and $T_{31}$ analogously. Note that $T_{123}, T_{12}, T_{23}, T_{31}$ are all disjoint and together cover all indices $[A]$. 

        If $|T_{12}|, |T_{23}|, T_{31}| < \left( \frac14 + \frac\epsilon2 \right) \cdot A+1$, then notice that $a(y_1, y_2, y_3; b)[t] = \maj(a(y_1; b)[t], a(y_2; b)[t], a(y_3; b)[t])$ for all $t \in [A]$, so $\Delta(a(y_1, y_2, y_3; b), a(y_i; b)) = |T_{-i}| \le \left( \frac14 + \epsilon \right) \cdot A$, where the $-i$ in $T_{-i}$ is the other two of $\{ 1, 2, 3 \}$ not equal to $i$. 

        Otherwise, at most one of $|T_{12}|, |T_{23}|, |T_{31}|$ is greater than $\left( \frac14 + \frac\epsilon2 \right) \cdot A$, since if e.g. $|T_{12}|, |T_{23}| > \left( \frac14 + \frac\epsilon2 \right) \cdot A$, then $\Delta(a(y_1; b), a(y_3; b)) = |T_{12}| + |T_{23}| > \left( \frac12 + \epsilon \right) \cdot A$, which is a contradiction. Now, suppose without loss of generality that $|T_{12}| > \left( \frac14 + \frac\epsilon2 \right) \cdot A$. In the above algorithm for setting $a(y_1, y_2, y_3; b)[t]$, note that we set $a(y_1, y_2, y_3; b)[t] = \maj(a(y_1; b)[t], a(y_2; b)[t], a(y_3; b)[t])$ up until the $\left\lceil \left( \frac14 + \epsilon \right) \cdot A \right\rceil$'th smallest index $t_0 \in T_{12}$. After that, we set $a(y_1, y_2, y_3; b)[t] = a(y_3; b)[t]$. Thus, $\Delta(a(y_1, y_2, y_3; b), a(y_3; b)) = \left\lceil \left( \frac14 + \frac\epsilon2 \right) \cdot A \right\rceil$. Note that there are $|T_{12}| - \left\lceil \left( \frac14 + \frac\epsilon2 \right) \cdot A \right\rceil$ more indices of $T_{12}$ for which $a(y_1, y_2, y_3; b)[t]$ will be set to $a(y_3; b)[t]$ instead of $a(x_1; b)[t] = a(x_2; b)[t]$. Since $\left( |T_{12}| - \left\lceil \left( \frac14 + \frac\epsilon2 \right) \cdot A \right\rceil \right) + |T_{-i}| = \left( |T_{12}| + |T_{-i}| \right) - \left\lceil \left( \frac14 + \frac\epsilon2 \right) \cdot A \right\rceil \le \left( \frac12 + \epsilon \right) \cdot A - \left\lceil \left( \frac14 + \frac\epsilon2 \right) \cdot A \right\rceil \le \left( \frac14 + \frac\epsilon2 \right) \cdot A$ for $i = 1,2$, it follows that for any index $t \in T_{23}$, $a(y_1, y_2, y_3; b)[t]$ will be set to $a(y_2; b)[t] = a(y_3; b)[t]$ and similarly for any $t \in T_{31}$, $a(y_1, y_2, y_3; b)[t]$ will be set to $a(y_3; b)[t] = a(y_1; b)[t]$. Then $\Delta(a(y_1, y_2, y_3; b), a(y_i; b)) \le \left( \frac14 + \frac\epsilon2 \right) \cdot A + 1$ for $i = 1,2$ as well.
    \end{proof}

    We now split the proof of Lemma~\ref{lem:14-12} into two cases depending on the size of $B$ relative to $A$. If $B \le \epsilon \cdot (A+B)$, then consider the following attack: the adversary chooses any string $b \in \{ 0, 1 \}^{B}$ and will corrupt Bob's communication so that Alice receives $b$. This takes at most $B \le \epsilon \cdot (A+B)$ corruptions. By Corollary~\ref{cor:shearer}, there exists three inputs $x_1, x_2, x_3 \in \Lambda$ for which $\Diam(a(x_1; b), a(x_2; b), a(x_3; b)) \le \left( \frac12 + \epsilon \right) \cdot A$, so by Claim~\ref{claim:14}, it takes at most $\left( \frac14 + \frac\epsilon2 \right) \cdot A + 1$ corruptions to corrupt Alice's messages to $a(x_1, x_2, x_3; b)$ from any of $a(x_1; b), a(x_2; b), a(x_3; b)$. This is a total of at most $\left( \frac14 + \frac{3\epsilon}{2} \right) \cdot A + \epsilon B + 1$ corruptions. 

    The second case is if $B > \epsilon \cdot (A+B)$. In this case, the adversary picks $(x_1, x_2, x_3)$ as follows. By Corollary~\ref{cor:shearer}, we have that
    \[
        \Pr_{\substack{(y_1, y_2, y_3) \in \binom{\Lambda}{3} \\ b \in \{ 0, 1 \}^{B}}} \left[ \Diam(a(y_1; b), a(y_2; b), a(y_3; b)) \le \left( \frac12 + \epsilon \right) \cdot A\right] \ge \frac{\epsilon}{4},
    \]
    so there exists $(x_1, x_2, x_3)$ such that 
    \[
        \Pr_{b \in \{ 0, 1 \}^{B}} \left[ \Diam(a(x_1; b), a(x_2; b), a(x_3; b)) \le \left( \frac12 + \epsilon \right) \cdot A\right] \ge \frac{\epsilon}{4}.
    \]
    In particular, there exists a set $\Pi \subseteq \{ 0, 1 \}^B$ of size $|\Pi| \ge \epsilon \cdot 2^{B-2}$ such that for any $b \in \Pi$ it holds that $\Diam(a(x_1; b), a(x_2; b), a(x_3; b)) \le \left( \frac12 + \epsilon \right) \cdot A$.

    Now, consider the following process: For each of Bob's bits, the adversary flips it independently with probability $\frac12$. This defines (at any point, the prefix of) a string $b \in \{ 0, 1 \}^B$. Note that Alice receives any string $b \in \{ 0, 1 \}$ with probability $2^{-B}$ since every of Bob's bit is flipped with $\frac12$ probability. Meanwhile, the adversary corrupts Alice's bits to $a(x_1, x_2, z_3; b)$ (recall that $a(x_1, x_2, z_3; b)[t]$ depends only on $b[1:\gamma(t)]$ and not on all of $b$ so this attack is well defined).

    For notation, let $\beta$ be the random variable denoting what Bob sends throughout this process. We have that
    \begin{align*}
        \Pr_{b} &
        \left[ 
            \left( \max_{i \in [3]} \Delta(a(x_i; b), a(x_1, x_2, x_3; b)) \le \left( \frac14 + \frac\epsilon2 \right) \cdot A \right)
            \wedge \left( \Delta(b, \beta) \le \left( \frac12 + \epsilon \right) \cdot B \right)
        \right] \\
        &\ge \Pr_{b} \left[ b \in \Pi \right] - \Pr_{b} \left[ \Delta(b, \beta) \le \left( \frac12 + \epsilon \right) \cdot B \right] \\
        &\ge \frac{\epsilon}{4} - e^{-2\epsilon^2 B/3} \\
        &> \frac{\epsilon}{4} - e^{-2\epsilon^3 (A+B)/3},
    \end{align*}
    where the first inequality follows from the fact that by Claim~\ref{claim:14}, $\max_{i \in [3]} \Delta(a(x_i; b), a(x_1, x_2, x_3; b)) \le \left( \frac14 + \frac\epsilon2 \right) \cdot A + 1$ holds whenever $b \in \Pi$, the second inequality follows from the Chernoff bound, and the last inequality follows from our assumption that $B > \epsilon \cdot (A+B)$. Since $\frac{\epsilon}{4} > \epsilon^2 = e^{-2\log(1/\epsilon)} \ge e^{-2\epsilon^3(A+B)/3}$, this expression is positive, and so there exists a choice of $b \in \{ 0, 1 \}^B$ for which $\max_{i \in [3]} \Delta(a(x_i; b), a(x_1, x_2, x_3; b)) \le \left( \frac14 + \frac\epsilon2 \right) \cdot A + 1$ and $\Delta(b, \beta) \le \left( \frac12 + \epsilon \right) \cdot B$. Then if the adversary corrupts Bob's messages to $b$ and Alice's messages to $a(x_1, x_2, x_3; b)$, the total corruption necessary regardless of which of $x_1, x_2, x_3$ Alice has is at most $\left( \frac14 + \frac\epsilon2 \right) \cdot A + \left( \frac12 + \epsilon \right) \cdot B$ + 1.

\end{proof}

We now state our second attack.

\attack{2}{

    Let inputs $x_1, x_2, x_3 \in \{ 0, 1 \}^k$ and transcript $T_1 \in \{ 0, 1 \}^{21n/47}$ be such that they satisfy Lemma~\ref{lem:14-12} for the first section of the protocol. Then, for the first section of the protocol, the adversary corrupts the transcript to look like $T_1$, using at most $\left( \frac14 + \frac{3\epsilon}{2} \right) A_1 + \left( \frac12 + \epsilon \right) B_1 + 1$ bits of corruption in the cases where Alice had $x_1,x_2,x_3$.

    \medskip
    
    For the second section of the protocol, the adversary corrupts the communication to the transcript $T_2 \in \{ 0, 1 \}^{26n/47}$ as found in Lemma~\ref{lem:13} such that there exist two of $x_1, x_2, x_3$, denoted $y_1, y_2$, for which the adversary can corrupt at most $\frac13 A_2+1$ of the communication so that the transcript of received bits is $T_2$ if Alice has $y_1$ or $y_2$.
}

\begin{lemma} \label{lem:attack2}
    Suppose that $k \ge \frac{141\log(1/\epsilon)}{21\epsilon^3} \cdot $ Attack~\ref{attack:2} succeeds with corrupting $\left( \frac14 + \frac{3\epsilon}2 \right) \cdot A_1 + \left(\frac12 + \epsilon\right) \cdot B_1 + \frac13 A_2+2$ bits. 
\end{lemma}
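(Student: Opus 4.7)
The plan is to compose the two pieces of the attack using Lemmas~\ref{lem:14-12} and~\ref{lem:13}, verifying that after the first section the three surviving candidate inputs induce a well-defined three-input subprotocol for the second section.

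First I would apply Lemma~\ref{lem:14-12} to the first section of the $\iECC$, regarded as a protocol on $K = 2^k$ possible inputs in which Alice sends $A_1$ bits and Bob sends $B_1$ bits (so $A_1 + B_1 = \frac{21n}{47}$). The hypothesis $K > (4/\epsilon)^{1/3}$ is immediate from $k > 100\epsilon^{-4}$, and the hypothesis $A_1 + B_1 \ge 3\epsilon^{-3}\log(1/\epsilon)$ follows from the quantitative assumption $k \ge \frac{141\log(1/\epsilon)}{21\epsilon^3}$ in the lemma statement (since $n \ge k$ and $\frac{21}{47} n \ge \frac{3\log(1/\epsilon)}{\epsilon^3}$). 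The lemma produces three inputs $x_1,x_2,x_3 \in \{0,1\}^k$ and a transcript $T_1 \in \{0,1\}^{21n/47}$ such that, regardless of which $x_i$ Alice holds, the adversary can corrupt at most $\left(\tfrac14 + \tfrac{3\epsilon}{2}\right) A_1 + \left(\tfrac12 + \epsilon\right) B_1 + 1$ bits to make Bob's view of the first section equal $T_1$.

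The key observation is then that after the first section Bob's view is literally the same in all three scenarios, namely $T_1$. Therefore Bob enters the second section in one identical state across the three cases, whereas Alice enters in one of three possibly different internal states (having potentially received different bits from Bob during the first section, depending on what the adversary fed her). Nonetheless, for each $i \in \{1,2,3\}$, the second section is a deterministic continuation: Alice's future messages depend only on $x_i$ together with her first-section view, and Bob's future messages depend only on $T_1$ together with what he receives in the second section. Consequently the second section can be viewed as a fresh three-input protocol of length $A_2 + B_2 = \frac{26n}{47}$, with $A_2$ bits from Alice and $B_2$ bits from Bob, and with three ``inputs'' $x_1,x_2,x_3$ that each determine a complete strategy for Alice.

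Now I would apply Lemma~\ref{lem:13} to this second-section protocol. It yields two of the three inputs $y_1, y_2 \in \{x_1, x_2, x_3\}$ and a transcript $T_2 \in \{0,1\}^{26n/47}$ such that the adversary can corrupt at most $\tfrac13 A_2 + 1$ bits of the second section to make Bob's received second-section transcript equal $T_2$ both when Alice holds $y_1$ and when she holds $y_2$. Combining the two phases, Bob's entire view of the protocol is $T_1 \,\|\, T_2$ in both the case Alice has $y_1$ and the case she has $y_2$, so the attack succeeds in the sense of the definition at the start of Section~\ref{sec:13/47}. Summing the two corruption budgets gives a total of at most
\[
\left(\tfrac14 + \tfrac{3\epsilon}{2}\right) A_1 + \left(\tfrac12 + \epsilon\right) B_1 + 1 \;+\; \tfrac13 A_2 + 1 \;=\; \left(\tfrac14 + \tfrac{3\epsilon}{2}\right) A_1 + \left(\tfrac12 + \epsilon\right) B_1 + \tfrac13 A_2 + 2,
\]
as claimed. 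The only non-routine point, and the main thing I would be careful about, is the composition step: justifying that once Bob's first-section view is pinned to $T_1$ uniformly over the three candidate inputs, the second section legitimately reduces to a three-input protocol to which Lemma~\ref{lem:13} can be applied in a black-box manner.
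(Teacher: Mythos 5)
Your proposal is correct and takes essentially the same route as the paper: apply Lemma~\ref{lem:14-12} to the first section after verifying $A_1 + B_1 \ge 3\log(1/\epsilon)/\epsilon^3$ from the hypothesis on $k$, then apply Lemma~\ref{lem:13} to the second section viewed as a three-input protocol, and sum the two corruption budgets. One small note on the composition step you rightly flag: your parenthetical worry that Alice may have received different bits from Bob across the three cases is actually moot, since $T_1$ in Lemma~\ref{lem:14-12} is a \emph{full} transcript (it pins down Alice's received bits as well), so Alice's post-first-section state differs only via her input $x_i$ and the reduction to a clean three-input protocol is immediate.
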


\begin{proof}
    Regardless of whether Alice has $y_1$ or $y_2$, the transcript from Bob's perspective when the adversary employs this attack looks like $T_1$ followed by $T_2$ (restricted to Bob's viewpoint). Since $A_1 + B_1 \ge \max \{ \frac{21}{26} A_2, k - A_2 \}$ (where $A_1 + A_2 \ge k$ holds since Alice needs to send $k$ bits to communicate $x$, even noiselessly), it follows that $A_1 + B_1 \ge \frac{21}{47} k \ge 3\log(1/\epsilon)/\epsilon^3$, so the condition of Lemma~\ref{lem:14-12} is satisfied. Then, by Lemma~\ref{lem:14-12}, the number of corruptions used in the first section of the protocol when Alice has $y_1$ or $y_2$ is at most $\left( \frac14 + \frac{3\epsilon}{2} \right) \cdot A_1 + \left( \frac12 + \epsilon \right) \cdot B_1$, and by Lemma~\ref{lem:13}, the number of corrupted bits in the second section whether Alice has $y_1$ or $y_2$ is at most $\frac13 A_2+1$. 
\end{proof}
\subsection{Attack 3} \label{sec:attack3}

In our third attack, we employ the following strategy. At a high level, we choose two inputs $x_1$ and $x_2$. In the first section of the protocol, Bob's view is as if Alice had $x_1$, while Bob's bits are corrupted so that Alice thinks that he has been receiving and responding correctly. In the second section of the protocol, Bob's bits are flipped randomly, and Alice's communication is corrupted to look like she has $x_2$. 


The first lemma we will need is to show that for the first section of the protocol, there are many inputs for which the uncorrupted transcripts have pairwise small Hamming distance.

\begin{lemma} \label{lem:big-set}
    Let $\epsilon > 0$ and suppose Alice has $K$ possible inputs. Then for any protocol consisting of $A$ bits from Alice and $B$ bits from Bob, there exists a set $\Gamma$ of size $K'_\epsilon(K) = K^{\epsilon} - \frac1\epsilon$ inputs such that for any two $x_1, x_2 \in \Gamma$, the relative distance of the (uncorrupted) transcripts in the case where Alice has $x_1$ or $x_2$ is $\le \left( \frac12 + \epsilon \right) \cdot (A+B)$.
\end{lemma}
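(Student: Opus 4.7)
The plan is to build an auxiliary graph $G$ on the $K$ candidate inputs whose edges capture pairs of inputs with close uncorrupted transcripts, then use Lemma~\ref{lem:clique-dist-12} to rule out large independent sets in $G$, and finally invoke Ramsey's theorem (Theorem~\ref{thm:ramsey}) to extract a large clique which will serve as $\Gamma$.

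More concretely, since the protocol is nonadaptive, each input $x$ uniquely determines the uncorrupted transcript $\tau(x) \in \{0,1\}^{A+B}$ of length $A+B$ (containing both Alice's and Bob's bits in the prescribed order). Let $G$ be the graph on the $K$ possible inputs with an edge between $x$ and $x'$ whenever $\Delta(\tau(x), \tau(x')) \le \bigl(\tfrac12 + \epsilon\bigr)(A+B)$. Applying Lemma~\ref{lem:clique-dist-12} to the transcripts of any $\lceil 1/\epsilon \rceil$ distinct inputs produces two at Hamming distance at most $\bigl(\tfrac12 + \tfrac{1}{2(\lceil 1/\epsilon\rceil - 1)}\bigr)(A+B) \le \bigl(\tfrac12 + \epsilon\bigr)(A+B)$ for $\epsilon$ sufficiently small, so those two inputs are joined by an edge of $G$. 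Hence $G$ contains no independent set of size $\lceil 1/\epsilon \rceil$.

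Next I would apply Ramsey's theorem: $G$ must contain a clique of size $m$ as soon as $K \ge R(m, \lceil 1/\epsilon\rceil)$. Setting $m = K^\epsilon - 1/\epsilon$ and using the bound $R(m,n) \le \binom{m+n-2}{n-1}$, one checks that $m + \lceil 1/\epsilon\rceil - 2 \le K^\epsilon$ and $\lceil 1/\epsilon\rceil - 1 \le 1/\epsilon$, so
\[
    R\bigl(K^\epsilon - 1/\epsilon,\; \lceil 1/\epsilon\rceil\bigr) \;\le\; \binom{K^\epsilon}{\lceil 1/\epsilon\rceil - 1} \;\le\; \frac{(K^\epsilon)^{\lceil 1/\epsilon\rceil - 1}}{(\lceil 1/\epsilon\rceil - 1)!} \;\le\; \frac{K}{(\lceil 1/\epsilon\rceil - 1)!} \;<\; K
\]
for $K$ sufficiently large. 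This gives a clique of size $K^\epsilon - 1/\epsilon$ in $G$, and the vertices of that clique form the desired set $\Gamma$: any two inputs in it have uncorrupted transcripts within Hamming distance $(\tfrac12 + \epsilon)(A+B)$.

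The only slightly delicate step is the Ramsey bound calculation (checking that the target size $K^\epsilon - 1/\epsilon$ is indeed attainable given that $G$ forbids an independent set of size $\lceil 1/\epsilon\rceil$); everything else is a direct reduction to the combinatorial tools already developed in Section~3.
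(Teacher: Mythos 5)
Your proposal is correct and follows essentially the same route as the paper's proof: build the graph on inputs with edges for transcript pairs within distance $\bigl(\tfrac12+\epsilon\bigr)(A+B)$, use Lemma~\ref{lem:clique-dist-12} to exclude independent sets of size about $1/\epsilon$, and invoke Theorem~\ref{thm:ramsey} with the binomial bound to extract a clique of size $K^\epsilon - 1/\epsilon$. Your version is slightly more careful about ceilings and spells out the Ramsey arithmetic, but the argument is the same.
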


\begin{proof} 
    Consider a graph where the $K$ possible inputs are the vertices, and draw an edge from $x$ to $y$ if $\Delta(T(x), T(y)) \le \left( \frac12 + \epsilon \right) \cdot (A+B)$, where $T(z)$ denotes the (uncorrupted) transcript corresponding to when Alice has input $z$. Then by Lemma~\ref{lem:clique-dist-12}, there does not exist an independent set of size $\frac{1}{\epsilon}$. Then by Theorem~\ref{thm:ramsey}, if there didn't exist an $K'_\epsilon(K)$-clique, then $K < R(K'_\epsilon(K), \frac{1}{\epsilon}) \le \binom{K'_\epsilon(K) + 1/\epsilon}{1/\epsilon} \le \left( K'_\epsilon(K) + \frac1\epsilon \right)^{1/\epsilon} = K$, contradiction.
\end{proof}

\begin{lemma} \label{lem:12-0}
    Let $\epsilon > 0$, and suppose Alice has $K' > \sqrt{2/\epsilon}$ possible inputs. For any protocol consisting of $A$ bits from Alice and $B$ bits from Bob such that $A+B > \frac{3\log(1/\epsilon)}{\epsilon^3}$, there exist two inputs $x_1, x_2$ such that for any advice $\alpha$ that Bob receives at the beginning of the protocol (after both Alice and Bob have fixed their strategies), there exist two transcripts $T_1, T_2$ such that the Bob's view of the two transcripts is the same, and that in the case of Alice having $x_1$, the adversary needs only corrupt $\left( \frac12 + 2\epsilon \right) A + \left( \frac12 + \epsilon \right) B$ bits to get transcript $T_1$, and in the case of Alice having $x_2$, the adversary needs only corrupt $\left( \frac12 + \epsilon \right) B$ bits so that the transcript is $T_2$. 
\end{lemma}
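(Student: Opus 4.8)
The plan is to build a two‑sided confusion attack. Write $a(x;b)$ for the $A$ bits Alice transmits when her input is $x$ and $b\in\{0,1\}^B$ is the string of feedback bits she receives (by the round structure this depends only on a prefix of $b$). When Alice holds $x_2$ the adversary corrupts \emph{only} Bob's channel: it fixes a feedback string $b$, delivers $b$ to Alice so she transmits $a(x_2;b)$, and leaves her bits intact; Bob then receives $a(x_2;b)$ and replies with some $\beta=\beta_\alpha(b)$, and the cost is $\Delta(\beta,b)$. When Alice holds $x_1$ the adversary does the same but additionally overwrites Alice's $A$ bits from $a(x_1;b)$ to $a(x_2;b)$, so Bob again receives $a(x_2;b)$ and replies with the same $\beta$; the extra cost is $\Delta(a(x_1;b),a(x_2;b))$. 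In both cases Bob's view is the pair $(a(x_2;b),\beta)$, so he cannot tell the cases apart. Thus it suffices to produce $x_1,x_2$ such that for every $\alpha$ there is a feedback string $b$ which is \emph{self-consistent} ($\Delta(\beta_\alpha(b),b)\le(\tfrac12+\epsilon)B$) and \emph{balanced} ($\Delta(a(x_1;b),a(x_2;b))\le(\tfrac12+2\epsilon)A$): self-consistency bounds the $x_2$-cost by $(\tfrac12+\epsilon)B$ and the Bob-part of the $x_1$-cost, and balancedness bounds the Alice-part of the $x_1$-cost.

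The main engine is the round structure: Bob's $i$-th transmitted bit is a function of $\alpha$ and of $b[1{:}i-1]$ only. Consequently, for a uniformly random $b$ the indicators ``$b[i]\ne\beta_\alpha(b)[i]$'' are fair coins given the past, so $\Delta(\beta_\alpha(b),b)$ is stochastically dominated by $\mathrm{Bin}(B,\tfrac12)$ and, by Hoeffding, $b$ fails to be self-consistent with probability at most $e^{-2\epsilon^2 B}$ (equivalently, building $b$ bit-by-bit and deciding at each step whether to copy Bob's next bit shows that for \emph{every} $\alpha$ at least $2^{B}(1-e^{-2\epsilon^2B})$ strings $b$ are self-consistent). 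For the pair itself, I would apply Corollary~\ref{cor:turan}: for each fixed $b$ the $K'$ strings $\{a(x;b)\}_x$ contain at least $\epsilon K'^2/2$ pairs at distance $\le(\tfrac12+\epsilon)A$, and averaging this count over a uniform $b$ and over the $\binom{K'}{2}$ pairs yields a single pair $(x_1,x_2)$ that is balanced for at least an $\epsilon$-fraction of all $b\in\{0,1\}^B$.

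I would then split on the size of $B$. If $B\ge\epsilon(A+B)$ then $B> 3\epsilon^{-2}\log(1/\epsilon)$, hence $e^{-2\epsilon^2B}<\epsilon$, so by a union bound over the two events above, for every $\alpha$ some feedback string is simultaneously self-consistent and balanced; this gives total corruption $(\tfrac12+\epsilon)A+(\tfrac12+\epsilon)B$ in the $x_1$-case and $(\tfrac12+\epsilon)B$ in the $x_2$-case, as required. If instead $B<\epsilon(A+B)$, then $B\le\epsilon A+\epsilon B$, so in the $x_1$-case any Bob-part of size $\le B$ is swallowed by the extra $\epsilon A$ of slack between $(\tfrac12+\epsilon)A$ and $(\tfrac12+2\epsilon)A$; it then suffices to find, for every $\alpha$, \emph{some} self-consistent $b$ together with any feedback $b'$ making $\Delta(a(x_1;b'),a(x_2;b))\le(\tfrac12+\epsilon)A$. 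Here I would run a further dichotomy on the diameters of the sets $R_x:=\{a(x;b'):b'\in\{0,1\}^B\}$: if some $R_{x_1}$ is near-antipodal ($\Diam$ larger than $(1-2\epsilon)A$), then comparing any string to the endpoints of such a pair shows every string lies within $(\tfrac12+\epsilon)A$ of $R_{x_1}$, so $x_1$ with any $x_2$ and any self-consistent $b$ works; if every $R_x$ is a tiny cluster ($\Diam$ at most $\epsilon A$) then Alice essentially ignores the feedback and Corollary~\ref{cor:turan} applied to $\{a(x;\mathbf 0)\}_x$ produces a pair balanced for all $b$ up to an $O(\epsilon)A$ slack; and the intermediate regime is handled by combining the diameter bound with Corollary~\ref{cor:turan}.

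The hard part will be precisely this small-$B$ regime, and in particular the intermediate diameter sub-case, where $B$ is too small for the concentration bound to beat $\epsilon$. The difficulty is genuinely the quantifier order: one fixed pair $(x_1,x_2)$ must work for \emph{every} advice string $\alpha$, i.e.\ for every feedback string the adaptively-responding honest Bob could be steered into sending, not merely for a random or typical one, and the generic ``$\epsilon$-fraction of feedbacks are balanced'' guarantee from Corollary~\ref{cor:turan} is essentially tight. Squeezing the diameter dichotomy down to a worst-case statement (and thereby justifying the looser $2\epsilon$, rather than $\epsilon$, in the Alice-side budget) is where the real work lies; I expect it to need a more careful, per-coordinate accounting of how $\{a(x;\cdot)\}$ varies with the feedback, on top of the counting and concentration inputs above.
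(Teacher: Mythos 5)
Your main case ($B \ge \epsilon(A+B)$) is exactly the paper's argument: average Corollary~\ref{cor:turan} over uniform $b$ to fix a pair $(x_1,x_2)$ (independently of Bob's advice $\alpha$) that is balanced for an $\Omega(\epsilon)$-fraction of feedback strings, observe that Bob's $i$-th bit depends only on $b[1{:}i-1]$ so $\Delta(b,\beta_\alpha(b))\sim\mathrm{Bin}(B,\tfrac12)$, and union-bound. That part is correct and complete.

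The small-$B$ case, which you flag as ``where the real work lies,'' is a genuine gap in your write-up --- but the difficulty is self-inflicted, and the paper's resolution is a one-liner rather than the per-coordinate diameter analysis you anticipate. When $B \le \epsilon(A+B)$ the adversary simply corrupts \emph{all} of Bob's bits to one fixed string $b$ in \emph{both} the $x_1$ and the $x_2$ case, paying at most $B \le \epsilon(A+B) = \epsilon A + \epsilon B$ on Bob's channel; there is no self-consistency requirement and no second feedback string $b'$. Then Corollary~\ref{cor:turan} applied to the (now non-adaptive) strings $\{a(x;b)\}_x$ for that single $b$ yields the close pair, and the $\epsilon A$ term is absorbed into the slack between $(\tfrac12+\epsilon)A$ and $(\tfrac12+2\epsilon)A$ on the $x_1$ side. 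You correctly perform this absorption for $x_1$ but then refuse it for $x_2$, because the lemma's literal $x_2$-budget of $(\tfrac12+\epsilon)B$ has no $A$-term to hide $\epsilon A$ in; insisting on that literal bound is what forces you into self-consistent feedback strings and the intractable intermediate-diameter subcase. The paper simply accepts an $x_2$-cost of $\epsilon(A+B)$ in this regime (so its Case~1 does not literally match the stated $(\tfrac12+\epsilon)B$ either), which is harmless because in Lemma~\ref{lem:attack3} the quantity $\epsilon(A_2+B_2)\le\epsilon n$ disappears into the global $2\epsilon n$ slack of Theorem~\ref{thm:13/47}. So the correct move is either to prove the lemma with the $x_2$-bound weakened to $\min\{(\tfrac12+\epsilon)B,\ \epsilon(A+B)\}$ (or $(\tfrac12+\epsilon)B+\epsilon A$), or to note that this weakening suffices downstream --- not to fight for $(\tfrac12+\epsilon)B$ when $B$ is tiny.
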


\begin{proof} 
    Let $\Lambda$ be all of Alice's possible inputs. Suppose that when Alice is sending the $t$'th bit, she has seen $\gamma(t)$ bits from Bob so far. For input $y$ and $b \in \{ 0, 1 \}^B$ ($b$ can be thought of what Alice receives from Bob throughout the protocol), we define the string $a(y; b)$ as follows: $a(y; b)[t]$ is what Alice would send for her $t$'th bit if she has $y$ as input and has seen $b[1:\gamma(t)]$ from Bob so far. We remark Alice's $t$'th bit depends only on $b[1:\gamma(t)]$ and not on the rest of $b$, but we've included all of $b$ for ease of notation. 

    If Bob speaks for $B \le \epsilon ( A + B)$ bits, then consider the following attack: the adversary chooses some string $b \in \{ 0, 1 \}^B$ and will corrupt Bob's bits to look like $b$, requiring at most $B \le \epsilon (A+B)$ corruption. Let $a(z; b)$ be Alice's bits if she has input $z$ and receives $b$ throughout the protocol. Then, by Corollary~\ref{cor:turan}, there exist two inputs $x_1, x_2$ such that $\Delta(a(x_1; b), a(x_2; b)) < \left( \frac12 + \epsilon \right) \cdot A$. The adversary can corrupt the Alice's communication in both cases to look like $a(x_2; b)$, requiring a total of $\le \left( \frac12 + \epsilon \right) \cdot A + \epsilon (A+B) = \left( \frac12 + 2\epsilon \right) \cdot A + \epsilon B$ corruption in the case that Alice has $x_1$, and $\le \epsilon (A+B)$ corruption in the case that Alice has $x_2$.  

    Otherwise, suppose for the remainder of this proof that Bob speaks for $B > \epsilon (A+B)$ bits. By Corollary~\ref{cor:turan}, we have that
    \[
        \Pr_{\substack{ (y_1, y_2) \in \binom{\Lambda}{2} \\ b \in \{ 0, 1 \}^B }} \left[ \Delta(a(y_1; b), a(y_2; b)) \le \left( \frac12 + \epsilon \right) \cdot A \right] \ge \frac\epsilon2,
    \]
    so there exists $(x_1, x_2) \in \binom{\Lambda}{2}$ for which 
    \[
        \Pr_{ b \in \{ 0, 1 \}^B } \left[ \Delta(a(x_1; b), a(x_2; b)) \le \left( \frac12 + \epsilon \right) \cdot A \right] \ge \frac\epsilon2.
    \]
    In particular, there exists a set $\Pi \subseteq \{ 0, 1 \}^B$ of size $|\Pi| \ge \epsilon \cdot 2^{B-1}$ such that for any $b \in \Pi$, it holds that $\Delta(a(x_1; b), a(x_2; b)) \le \left( \frac12 + \epsilon \right) \cdot A$. Note that this choice of $(x_1, x_2)$ is independent of any advice $\alpha$ that Bob may have received.
    
    Consider the following process: for each bit that Bob sends, we flip it with probability $\frac12$. This defines (at any point, the prefix of) a string $b \in \{ 0, 1 \}^B$. Note that Alice receives any string $b \in \{ 0, 1 \}^B$ with probability $2^{-B}$ since each of Bob's bits are flipped with probability $\frac12$. Meanwhile, the adversary corrupts Alice's string to $a(x_2; b)$ (note that $a(x_2; b)[t]$ depends only on $b[1:\gamma(t)]$).

    For notation, let $\beta$ denote what Bob sends throughout this process. We have that 
    \begin{align*}
        \Pr_b & \left[ 
        \left( \Delta(a(x_1; b), a(x_2; b)) \le \left( \frac12 + \epsilon \right) \cdot A \right) 
        \wedge \left( \Delta(b, \beta) \le \left( \frac12 + \epsilon \right) \cdot B \right)
        \right] \\
        & \ge \Pr_b [b \in \Pi] - \Pr_b \left[ \Delta(b, \beta) \le \left( \frac12 + \epsilon \right) \cdot B \right] \\
        & \ge \frac\epsilon2 - e^{-2\epsilon^2 B/3} \\
        &\ge \frac\epsilon2 - e^{-2\epsilon^3 (A+B)/3}.
    \end{align*}
    We have that $\frac\epsilon2 > \epsilon^2 = e^{-2\log(1/\epsilon)} > e^{-2\epsilon^3(A+B)/3}$, so this expression is positive, and so there exists a choice of $b \in \{ 0, 1 \}^B$ for which $\Delta(a(x_1; b), a(x_2; b)) \le \left( \frac12 + \epsilon \right) \cdot A$ and $\Delta(b, \beta) \le \left( \frac12 + \epsilon \right) \cdot B$. In other words, this attack in the case of Alice having $x_1$ uses at most $\left( \frac12 + \epsilon \right) \cdot A + \left( \frac12 + \epsilon \right) \cdot B$ corruption, and in the case of Alice having $x_2$, it uses $\left( \frac12 + \epsilon \right) \cdot B$ corruption.
    
\end{proof}

\attack{3}{
    Denote by $T_1(y)$ the uncorrupted transcript corresponding to Alice having input $y$ in the first section of the protocol. By Lemma~\ref{lem:big-set}, there exists a set $M$ of $2^{\epsilon k} - \frac1\epsilon$ inputs such that for every $y_1, y_2 \in M$, it holds that $\Delta(T_1(y_1), T_1(y_2)) \le \left( \frac12 + \epsilon \right) \cdot n$. Next, consider the second section of the protocol, conditioned on Alice having seen $T_1(x)$ (restricted to her view) in the first section of the protocol. By Lemma~\ref{lem:12-0} there exist $x_1, x_2 \in M$ such that no matter what advice $\alpha$ Bob receives at the beginning of this second section, there exist transcripts $T_{2,1}(\alpha)$ and $T_{2,2}(\alpha)$ such that Bob's view of the two transcripts are the same, and these $T_{2,1}(\alpha), T_{2,2}(\alpha)$ satisfy the properties listed in Lemma~\ref{lem:12-0}.

    \medskip

    In the first section of the protocol, the adversary corrupts the communication so that Bob always receives $T_1(x_1)$ (restricted to the bits that Bob sees), and so that Alice receives $T(x)$ (restricted to the bits that she sees), where $x$ denotes Alice's input. 

    \medskip 

    In the second section of the protocol, the adversary corrupts the communication so that the transcript is $T_{2,1}(\alpha = T_1(x_1))$ in the case of Alice having $x_1$, and $T_{2,2}(\alpha = T_1(x_1))$ otherwise. 

    
    
    

    
}

\begin{lemma} \label{lem:attack3}
    Suppose that $k > \frac{141\log(1/\epsilon)}{26\epsilon^3}$. Attack~\ref{attack:3} succeeds with corrupting 
    \[
        \max \left\{ \left( \frac12 + 2\epsilon \right) \cdot A_2 + \left( \frac12 + \epsilon \right) \cdot B_2~,~~ \left( \frac12 + \epsilon \right) \cdot A_1 + \left( \frac12 + \epsilon \right) \cdot B_1 + \left( \frac12 + \epsilon \right) \cdot B_2 \right\}
    \]
    bits. 
\end{lemma}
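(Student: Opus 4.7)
The plan is to bound the adversary's corruption in Attack~\ref{attack:3} separately in the two cases where Alice has input $x_1$ versus $x_2$, and then take the maximum. First, I would verify the preconditions of the lemmas invoked by the attack. Lemma~\ref{lem:big-set}, applied to section~1, produces a set $M \subseteq \{0,1\}^k$ of $2^{\epsilon k}-1/\epsilon$ inputs whose section-1 transcripts are pairwise within Hamming distance $\left(\tfrac12+\epsilon\right)(A_1+B_1)$. To apply Lemma~\ref{lem:12-0} to section~2 with Alice's inputs restricted to $M$, one needs $|M|>\sqrt{2/\epsilon}$ and $A_2+B_2 > 3\log(1/\epsilon)/\epsilon^3$; the first holds for $k$ sufficiently large, and the second follows from $A_2+B_2 = \tfrac{26}{47}n \ge \tfrac{26}{47}k$ together with the hypothesis $k > \tfrac{141\log(1/\epsilon)}{26\epsilon^3}$.

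For Case~1, Alice has input $x_1$, so the natural section-1 transcript already equals $T_1(x_1)$ and no section-1 corruption is required. In section~2, the adversary applies the attack from Lemma~\ref{lem:12-0} with Bob's advice being his section-1 view to corrupt the transcript to $T_{2,1}(T_1(x_1))$, at a cost of $\left(\tfrac12+2\epsilon\right)A_2 + \left(\tfrac12+\epsilon\right)B_2$. The total corruption in this case is thus $\left(\tfrac12+2\epsilon\right)A_2 + \left(\tfrac12+\epsilon\right)B_2$.

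For Case~2, Alice has input $x_2$. In section~1 her view is her natural $T_1(x_2)$, so she transmits the Alice-to-Bob bits of $T_1(x_2)$, which are corrupted to match the corresponding bits of $T_1(x_1)$; Bob, whose view is $T_1(x_1)$, transmits the Bob-to-Alice bits of $T_1(x_1)$, which are corrupted to match those of $T_1(x_2)$. The total section-1 cost is exactly $\Delta(T_1(x_1),T_1(x_2)) \le \left(\tfrac12+\epsilon\right)(A_1+B_1)$ by Lemma~\ref{lem:big-set}. In section~2, Lemma~\ref{lem:12-0} provides a transcript $T_{2,2}(T_1(x_1))$ with the same Bob-view as $T_{2,1}(T_1(x_1))$ at cost $\left(\tfrac12+\epsilon\right)B_2$. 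The total for Case~2 is therefore $\left(\tfrac12+\epsilon\right)A_1+\left(\tfrac12+\epsilon\right)B_1 + \left(\tfrac12+\epsilon\right)B_2$, matching the second term in the stated max.

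Taking the maximum over the two cases yields the claimed bound; and since Bob's view is the same in both cases (identical section-1 view $T_1(x_1)|_{\text{Bob}}$, followed by matching Bob-views of $T_{2,1}$ and $T_{2,2}$), the attack succeeds. The main subtlety, which I expect to be the delicate step to write cleanly, is justifying the application of Lemma~\ref{lem:12-0} to section~2: once the adversary commits to forcing Bob's section-1 view to $T_1(x_1)|_{\text{Bob}}$ and Alice's section-1 view to her natural $T_1(x)|_{\text{Alice}}$, section~2 becomes a well-defined fixed-strategy subprotocol over Alice's restricted input set $M$, with Bob receiving the fixed advice $T_1(x_1)|_{\text{Bob}}$, and the lemma applies directly to produce $x_1,x_2\in M$ with the required section-2 transcripts.
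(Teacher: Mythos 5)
Your proposal is correct and follows essentially the same approach as the paper's proof: verify the preconditions on $M$ and on $A_2+B_2$, then bound the two cases (Alice has $x_1$: zero cost in section~1 and $(\tfrac12+2\epsilon)A_2+(\tfrac12+\epsilon)B_2$ in section~2; Alice has $x_2$: $\Delta(T_1(x_1),T_1(x_2))\le(\tfrac12+\epsilon)(A_1+B_1)$ in section~1 and $(\tfrac12+\epsilon)B_2$ in section~2), and take the maximum. Your derivation of $A_2+B_2\ge\tfrac{26}{47}k$ via $n\ge A_1+A_2\ge k$ is a slightly more direct route than the paper's $A_2+B_2\ge\max\{\tfrac{26}{21}A_1,\,k-A_1\}$, and your explicit check of $|M|>\sqrt{2/\epsilon}$ is a precondition the paper leaves implicit, but the substance is the same.
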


\begin{proof}
    Regardless of whether Alice has $x_1$ or $x_2$, Bob receives the same transcript (restricted to his view). Since $A_2 + B_2 \ge \max \{ \frac{26}{21} A_1, k - A_1 \}$ (where $A_1 + A_2 \ge k$ holds since Alice needs to send $k$ bits to communicate $x$, even noiselessly), it follows that $A_2 + B_2 \ge \frac{26}{47} k > \frac{3\log(1/\epsilon)}{\epsilon^3}$, so the condition of Lemma~\ref{lem:12-0} is satisfied. 
    If Alice has $x_1$, the amount of corruption in the first section is $0$, while in the second section the adversary corrupted at most $\left( \frac12 + 2\epsilon \right) A_2 + \left( \frac12 + \epsilon \right) \cdot B_2$ bits. If Alice has $x_2$, the amount of corruption in the first section is $\Delta(T_1(x_1), T_1(x_2)) \le \left( \frac12 + \epsilon \right) \cdot (A_1 + B_1)$, and in the second section the adversary corrupts at most $\left( \frac12 + \epsilon \right) \cdot B_2$ bits.
\end{proof}

\subsection{Proof of Theorem~\ref{thm:13/47}} \label{sec:13/47-proof}

In this section, we prove our main theorem, restated below.

\main*

We begin with the following lemma.

\begin{lemma} \label{lem:ineq}
    For any nonnegative $a_1,b_1,a_2,b_2 \in \bbR$ where $a_1+b_1=\frac{21}{47}$ and $a_1+b_1+a_2+b_2=1$, define 
    \begin{align*}
        \delta_1 &= \frac13 a_1 + \frac13 a_2, \\
        \delta_2 &= \frac14 a_1 + \frac12 b_1 + \frac13 a_2, \\
        \delta_3 &= \max \left\{ \frac12 a_2 + \frac12 b_2, \frac12 a_1 + \frac12 b_1 + \frac12 b_2 \right\}.
    \end{align*}
    It holds that 
    \[
        \min \{ \delta_1, \delta_2, \delta_3 \} \le \frac{13}{47}.
    \]
\end{lemma}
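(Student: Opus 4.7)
The plan is to carry out a short case analysis on the sizes of $a_2$ and $a_1 + a_2$, exploiting the identity $a_2 + b_2 = \frac{26}{47}$ that follows from the two constraints. The key preliminary observation I would record first is that the first argument of the max inside $\delta_3$ equals $\frac{1}{2}(a_2+b_2) = \frac{13}{47}$, so $\delta_3 \ge \frac{13}{47}$ automatically; moreover, $\delta_3 = \frac{13}{47}$ exactly when the second argument $\frac{21}{94} + \frac{b_2}{2}$ is at most $\frac{13}{47}$, which happens iff $a_2 \ge \frac{21}{47}$.

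Using this, I would split into three exhaustive cases. First, if $a_1 + a_2 \le \frac{39}{47}$, then $\delta_1 = \frac{1}{3}(a_1+a_2) \le \frac{13}{47}$ and we are done. Second, if $a_1 + a_2 > \frac{39}{47}$ but $a_2 \ge \frac{21}{47}$, then $b_2 \le \frac{5}{47}$, so by the preliminary observation $\delta_3 = \frac{13}{47}$. Third, if $a_1+a_2 > \frac{39}{47}$ and $a_2 < \frac{21}{47}$, then we must have $a_1 > \frac{18}{47}$; substituting $b_1 = \frac{21}{47} - a_1$ into $\delta_2$ gives $\delta_2 = -\frac{a_1}{4} + \frac{21}{94} + \frac{a_2}{3}$, and the strict bounds $a_1 > \frac{18}{47}$ and $a_2 < \frac{21}{47}$ yield $\delta_2 < -\frac{9}{94} + \frac{21}{94} + \frac{14}{94} = \frac{13}{47}$ after a one-line computation.

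I do not expect any real obstacle here. The only subtlety is identifying the correct case boundaries; these are dictated by the tight extreme configuration $(a_1,b_1,a_2,b_2) = (\tfrac{18}{47}, \tfrac{3}{47}, \tfrac{21}{47}, \tfrac{5}{47})$, at which all three of $\delta_1, \delta_2, \delta_3$ simultaneously equal $\frac{13}{47}$ and which can be located by solving the corresponding linear system. The three cases above then correspond to moving away from this point along the three directions in which one of the $\delta_i$ becomes the slack constraint. An alternative would be to hunt for a single convex combination $\alpha \delta_1 + \beta \delta_2 + \gamma \delta_3 \le \frac{13}{47}$ with nonnegative weights summing to $1$, but the max inside $\delta_3$ makes a uniform LP-style bound slightly clumsier than the three-way split sketched above.
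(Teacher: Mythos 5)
Your proof is correct. The preliminary observation that the first branch of the max in $\delta_3$ is identically $\frac{13}{47}$, the three-way case split on $a_1+a_2 \lessgtr \frac{39}{47}$ and $a_2 \lessgtr \frac{21}{47}$, and each of the three resulting bounds all check out, and the cases are exhaustive.

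The paper takes the route you considered and set aside: it substitutes $b_1 = \frac{21}{47} - a_1$ and $b_2 = \frac{26}{47} - a_2$ to rewrite $\delta_3 = \max\{\frac{13}{47}, \frac12 - \frac12 a_2\}$, then observes that
$\min\{\delta_1,\delta_2,\delta_3\} \le \frac{13}{47}$ iff $\min\{\delta_1,\delta_2,\delta_3'\} \le \frac{13}{47}$ with $\delta_3' = \frac12 - \frac12 a_2$ (if the second branch is below $\frac{13}{47}$ both sides hold trivially; otherwise $\delta_3 = \delta_3'$), which dissolves the max. It then exhibits the fixed convex combination $\frac{9}{35}\delta_1 + \frac{12}{35}\delta_2 + \frac{2}{5}\delta_3' = \frac{13}{47}$ and concludes by averaging. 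So the paper's LP-style argument is actually quite clean once the max is pre-processed; your worry about clumsiness was resolvable. The tradeoff: the paper's version is shorter to verify but the weights $(\frac{9}{35}, \frac{12}{35}, \frac25)$ appear without derivation, whereas your case split is longer but self-explanatory, tracking exactly which $\delta_i$ is slack in each regime, and your identification of the tight point $(\frac{18}{47}, \frac{3}{47}, \frac{21}{47}, \frac{5}{47})$ where all three are equal makes the structure of the bound transparent.
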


\begin{proof}
    Using that $b_1=\frac{21}{47}-a_1$ and $b_2=\frac{26}{47}-a_2$, we can substitute:
    \begin{align*}
        \delta_1 &= \frac13 a_1 + \frac13 a_2, \\
        \delta_2 &= \frac{21}{94} - \frac14 a_1 + \frac13 a_2, \\
        \delta_3 &= \max \left\{ \frac{13}{37}, \frac12 - \frac12 a_2 \right\}.
    \end{align*}
    Then, 
    \[
        \min \{ \delta_1, \delta_2, \delta_3 \} \le \frac{13}{47} \Longleftrightarrow \min \{ \delta_1, \delta_2, \delta'_3 \} \le \frac{13}{47},
    \]
    where $\delta'_3 = \frac12 - \frac12 a_2$. But note that 
    \[
        \frac{9}{35} \delta_1 + \frac{12}{35} \delta_2 + \frac{2}{5} \delta'_3 
        = \frac{13}{47}
    \]
    where the weights $\frac{9}{35}, \frac{12}{35}, \frac25$ sum to $1$, so at least one of $\delta_1, \delta_2, \delta'_3$ must be at most $\frac{13}{47}$. 


\end{proof}

\begin{proof}[Proof of Theorem~\ref{thm:13/47}]
    Recall that the length of the $\iECC$ is $n:=A_1+B_1+A_2+B_2 \ge A_1 + A_2 \geq k>\epsilon^{-3}$ (since Alice needs to send at least $k$ bits to communicate $x$, even in the noiseless setting). Our goal is to show that regardless of the values of $A_1, B_1, A_2, B_2$, at least one of Attacks~\ref{attack:1},~\ref{attack:2}, and~\ref{attack:3} will require at most $\left( \frac{13}{47} + 2\epsilon \right) \cdot n$ corruptions.

    By Lemma~\ref{lem:attack1}, Attack~\ref{attack:1} succeeds using 
    \[
        \frac13 A_1 + \frac13 A_2 + 1 \le \left( \delta_1 + 2\epsilon \right) n
    \]
    bits of corruption, where $\delta_1 := (\frac13 A_1 + \frac13 A_2)/n$. 

    By Lemma~\ref{lem:attack2}, Attack~\ref{attack:2} succeeds using 
    \[
        \left( \frac14 + \frac{3\epsilon}{2} \right) \cdot A_1 + \left( \frac12 + \epsilon \right) \cdot B_1 + \frac13 A_2 + 2 \le \frac14 A_1 + \frac12 B_1 + \frac13A_2 + 2\epsilon n = (\delta_2 + 2\epsilon) n
    \]
    bits of corruption, where we define $\delta_2 = (\frac14 A_1 + \frac12 B_1 + \frac13A_2)/n$.

    By Lemma~\ref{lem:attack3}, Attack~\ref{attack:3} succeeds using 
    \begin{align*}
        \max & \left\{ \left( \frac12 + 2\epsilon \right) \cdot A_2 + \left( \frac12 + \epsilon \right) \cdot B_2~,~~ \left( \frac12 + \epsilon \right) \cdot A_1 + \left( \frac12 + \epsilon \right) \cdot B_1 + \left( \frac12 + \epsilon \right) \cdot B_2 \right\} \\
        &\le \max \left\{ \frac12 A_2 + \frac12 B_2 +2\epsilon n, \frac12 A_1 + \frac12 B_1 + \frac12 B_2 + 2\epsilon n  \right\} \\
        &= (\delta_3 + 2\epsilon) n
    \end{align*}
    bits of corruption, where we define $\delta_3 = \max \left\{ \frac12 A_2 + \frac12 B_2, \frac12 A_1 + \frac12 B_1 + \frac12 B_2  \right\} / n$. 

    By Lemma~\ref{lem:ineq}, we have that $\min \{ \delta_1, \delta_2, \delta_3 \} \le \frac{13}{47}$, so at least one of the three attacks succeeds with $\left( \frac{13}{47} + 2\epsilon \right) \cdot n$ corruption, regardless of the relative ratios of $A_1, B_1, A_2, B_2$.

\end{proof}

\bibliographystyle{alpha}
\bibliography{refs}

\end{document}